\newcommand{\setD}{{\mathcal D}}
\newcommand{\setS}{{\mathcal S}}  
\newcommand{\setSp}{{\setS^+}}
\newcommand{\setQ}{\{ q_\serv \}_{\serv\in\setS}}  
\newcommand{\setDu}{\{ \rate_\serv \}_{\serv\in\setS}} 
\newcommand{\setO}{{\mathcal O}} 
\newcommand{\serv}{s}
\newcommand{\servT}{r}
\newcommand{\servU}{u}
\newcommand{\disp}{d}
\newcommand{\receives}{\bar{a}}
\newcommand\set[1]{\{#1\}}
\newcommand\T[1]{\noindent\textbf{#1}}
\newcommand{\widthToFill}{\rate_{tot}}
\newcommand{\filled}{l} 
\newcommand{\curr}{r}
\newcommand{\val}{\textit{val}}
\newcommand{\rate}{\mu}
\newcommand{\iba}{\textsc{iba}}
\newcommand{\iwl}{\textsc{iwl}}
\newcommand{\const}{z}
\newcommand{\diff}{\text{diff}}
\newcommand{\SCD}{\textsc{StochasticallyCoordinatedDispatching}}
\newcommand{\brac}[1]{\left\{ #1 \right\}}
\newcommand{\Natz}{\mathbb{N}}
\newcommand{\wl}{\iwl}
\newcommand{\TT}[1]{\noindent\textbf{#1}}
\newcommand{\bp}[1]{\Big(#1\Big)}
\providecommand{\ie}{\emph{i.e.,} }
\providecommand{\eg}{\emph{e.g.,} }
\DeclareMathOperator{\E}{\mathbb{E}}
\newtheorem{corollary}{Corollary}
\newtheorem{lemma}{Lemma}
\newtheorem{definition}{Definition}
\title{Stochastic Coordination in Heterogeneous\\ Load Balancing Systems}
\author{ 
    Guy Goren\\
	\texttt{\small Technion}
	\and 
	Shay Vargaftik\\
	\texttt{\small VMware Research}
	\and 
	Yoram Moses\\
	\texttt{\small Technion}
}
\begin{document}

\date{}
\maketitle








\begin{abstract}
 Current-day data centers and high-volume cloud services employ a broad set of heterogeneous servers. In such settings, client requests typically arrive at multiple entry points, and dispatching them to servers is an urgent distributed systems problem. 
 This paper presents an efficient solution to the load balancing problem in such systems that improves on and overcomes problems of previous solutions. The load balancing problem is formulated as a stochastic optimization problem, and an efficient algorithmic solution is obtained based on a subtle mathematical analysis of the problem. Finally, extensive evaluation of the solution on simulated data shows that it outperforms previous solutions. Moreover, the resulting dispatching policy can be computed very efficiently, making the solution practically viable. 
\end{abstract}



%
\section{Introduction}


Load balancing in modern computer clusters is a challenging task. Unlike in the traditional parallel server model where all client requests arrive through a single centralized entry point, today's cluster designs are distributed~\cite{barbette2020high,gandhi2014duet,eisenbud2016maglev,prekas2017zygos}. In particular, they involve many dispatchers that serve as entry points to client requests and distribute these requests among a 
multitude of servers.
%
%
The dispatchers' goal is to distribute the client requests in a balanced manner so that no server is overloaded or underutilized. This is particularly challenging due to two system design attributes: (1) the dispatchers must take decisions immediately upon arrival of requests, and independently from each other. This requirement is critical to adhere to the high rate of incoming client requests and to the extremely low required response times~\cite{dean2013tail,nishtala2017hipster}. Indeed, even a small sub-second addition to response time in dynamic content websites can lead to a persistent loss of users and revenue~\cite{lu2011join,schurman2009user}; (2) today's systems are heterogeneous with different servers containing different generations of CPUs, various types of acceleration devices such as GPUs, FPGAs, and ASICs, with varying  processing speeds~\cite{govindan2016evolve,huang2016programming,delimitrou2013paragon,mars2011heterogeneity,duato2010rcuda}. 
This paper presents a new load balancing solution for distributed dispatchers in heterogeneous systems.

Most previous works on distributed load balancing focused either on homogeneous systems or on systems in which dispatchers have limited information about server queue-lengths. 
In contrast, in today's heterogeneous systems the dispatchers (e.g., high-performance production L7 load balancers such as HAProxy~\cite{haproxy_gen} and NGINX~\cite{ngynx_gen}) \mbox{typically have access to abundant queue-length information.} 


A series of recent works have shown that popular ``{\it join-the-shortest-queue} (JSQ)''-based load-balancing policies behave poorly when the dispatchers' information is highly correlated~\cite{mitzenmacher2000useful,vargaftik2020lsq,DISC2020,zhou2020asymptotically}. 
In particular, such policies suffer from so-called ``{\it herd behavior},'' in which  different dispatchers concurrently recognize the same set of less loaded servers and forward all their incoming requests to this set. 
The queue-lengths of the servers in this small set then grow rapidly, causing excessive processing delays, increased response times, and even dropped jobs. 
In severe cases, herding may even result in servers stalling and crashing.

Herding has been dubbed the ``{\it finger of death}'' in~\cite{youtube_lec}.
In order to avoid it, both in recent theory works~\cite{vargaftik2020lsq,zhou2020asymptotically} and in modern state-of-the-art production deployments~\cite{haproxy_po2,ngynx_po2}, researchers introduced load balancing policies that break symmetry among the dispatchers via a random subsampling of queue-length information.
Even large cloud service companies such as Netflix report making use of such limited queue-size information policies to avoid suffering from detrimental herd behavior effects~\cite{youtube_lec,netflixEdge}.  
They choose to do so despite the potential for degraded resource utilization and worse response times.

 The better-information/worse-performance paradox manifested by herding was recently addressed for load balancing in homogeneous systems (in which servers are all equally powerful) by \cite{DISC2020}.
 They demonstrated that the tradeoff between more accurate information and herding is not inherent. 
 Rather, herding stems from the fact that currently employed dispatching techniques do not account for the fact that multiple dispatchers operate concurrently in the system. 
 Based on this observation, they suggested a solution in which the dispatchers' policies address the presence and concurrent operation of multiple dispatchers by employing \emph{stochastic coordination}~(Section ~\ref{sec:Related work}). 
 While their solution provides superior performance in homogeneous systems, it performs poorly in heterogeneous systems, since it  does not account for variation in server service rates. 
 

Our work generalizes the approach of~\cite{DISC2020} and obtains an efficient load-balancing policy based on stochastic coordination for the more challenging heterogeneous case. We initially expected that mild adjustments to the scheme of~\cite{DISC2020} for homogeneous systems would yield similarly effective policies for the heterogeneous case. That turned out not to be the case. The generalization required overcoming nontrivial hurdles at the level of mathematical analysis, of the algorithmic treatment, and at the conceptual level. The solution that we obtained has several features that the policy presented in~\cite{DISC2020} does not have. In both cases, a randomized load-balancing policy is designed based on the solution of a stochastic optimization problem. In the homogeneous case, \cite{DISC2020} have shown that  this solution can be formulated in terms of a closed-form formula that can be efficiently computed in real-time by the dispatchers. In contrast, explicitly computing the solution obtained for the heterogeneous case requires exponential time. Even obtaining an approximate solution using standard optimization techniques requires cubic time, which is not feasible for dispatchers to perform. Using a novel analysis of the optimization problem in the heterogeneous case, we show that the search for an optimal solution can be made in an extremely efficient manner, which results in a practically feasible algorithm. We thus obtain a novel policy, we term \emph{stochastically coordinated dispatching} ($SCD$) that greatly improves over existing load-balancing policies for heterogeneous systems, and is essentially as easy to compute as the simplest policies are.

To evaluate the performance of $SCD$ we implemented it as well as 10 other algorithms (including both traditional techniques and recent state-of-the-art ones) in C++.
Extensive evaluation results indicate that $SCD$ consistently outperforms all tested techniques over different systems, heterogeneity levels and metrics.
For example, at high loads, $SCD$ improves the 99th percentile delay of client requests by more than a factor of $2$ in comparison to the second-best policy, and by more than an order of magnitude compared to the heterogeneity-oblivious solution in \cite{DISC2020}. In terms of computational running time, $SCD$ is competitive with currently employed techniques (e.g., $JSQ$). For example, even in a system with 100 servers, $SCD$ requires only a few microseconds on a single CPU core to make dispatching decisions.  
Finally, our results are reproducible. Our $SCD$ implementation is available on GitHub~\cite{scd_git_code}.


\subsection{Related work}\label{sec:Related work}

In a traditional computer cluster with a centralized design and a single dispatcher that takes all the decisions, a centralized algorithm such as $JSQ$, that assigns each arriving request to the currently shortest queue, offers favorable performance and strong theoretical guarantees~\cite{weber1978optimal,winston1977optimality,eryilmaz2012asymptotically}. 
However, in a distributed design, where each dispatcher independently follows $JSQ$, the aforementioned herding phenomenon occurs~~\cite{youtube_lec,netflixEdge,vargaftik2020lsq,zhou2020asymptotically}.
As a result, both researchers and system designers often resort to traditional techniques that were originally designed either for centralized systems with a single dispatcher or for systems based on limited queue state information.
%
For example, in the {\it power-of-$d$-choices} policy (denoted $JSQ(d)$)~\cite{luczak2006maximum,vvedenskaya1996queueing,mitzenmacher2001power}, when requests arrive, a dispatcher samples $d$ servers uniformly at random and employs $JSQ$ considering only the $d$ sampled servers. 
This policy alleviates the herding phenomenon for sufficiently low $d$ values since it is only with a low probability that different dispatchers will sample the same good server(s) at a given point in time. 
However, low $d$ values often come at the price of longer response times and low resource utilization~\cite{foss1998stability}, whereas herding does occur for higher $d$ values. 
Moreover, in heterogeneous systems, $JSQ(d)$ may even result in \emph{instability}\footnote{A queue is unstable when its size continues to grow in an unbounded manner. 
A load balancing system is unstable if at least a single queue in the system is unstable. 
Instability in heterogeneous load balancing systems usually occurs when the faster queues are constantly idling because they do not receive enough requests, whereas slower servers receive too many requests, and their queue continues to grow.} for any $d$ value strictly below the number of servers.


Recently, to account for server heterogeneity, {\it shortest-expected-delay} ($SED$) policies were proposed. 
These policies operate similarly to $JSQ$ and $JSQ(d)$ but, instead of ranking servers according to their queue-lengths, servers are compared according to their {\it normalized} queue-lengths, i.e., their queue length divided by their processing capacity. 
This way, a job is sent not to the server with the shortest queue but to the server with the shortest expected wait time. 
Indeed these policies were shown to outperform their heterogeneity-unaware $JSQ$-based counterparts in heterogeneous systems (for homogeneous systems these policies coincide)~\cite{gardner2021scalable,jaleelgeneral,gardner2019smart,selen2016steady}. 
However, in a setting with multiple, distributed dispatchers, the $SED$ policies suffer from the same herding phenomenons.

The first dispatching policy that was designed specifically for the multi-dispatcher case, called {\it join-the-idle-queue} ($JIQ$)~\cite{lu2011join,mitzenmacher2016analyzing,stolyar2017pull,stolyar2015pull,van2017load}, was originally introduced by Microsoft~\cite{jiq_microsoft}.
In $JIQ$, a dispatcher sends requests only to idle servers. 
If there are no idle servers, the requests are forwarded to randomly chosen ones. 
$JIQ$ significantly outperforms $JSQ(d)$ when the system operates at low loads.
However, its performance quickly deteriorates when the load increases and the dispatching approaches a random one. 
In fact, as is the case with $JSQ(d)$, the $JIQ$ policy may exhibit instability in the presence of high loads~\cite{zhou2017designing,atar2020persistent}.  
A recent work~\cite{gardner2021scalable} considered an improvement to $JIQ$ that accounts for server heterogeneity by adapting the server sampling probabilities to account for their processing rate. 
While this policy restores stability when the load is high, it is significantly outperformed by policies such as $SED$ and even $JSQ$ when queue-length information is available~\cite{zhou2020asymptotically}.

Recent state-of-the-art techniques include the local-shortest-queue ($LSQ$)~\cite{vargaftik2020lsq} and the {\it local-estimation-driven} ($LED$) \cite{zhou2020asymptotically} policies. They address the limitations of both $JSQ(d)$ and $JIQ$ by maintaining a local array of server queue-lengths at each dispatcher. Each dispatcher updates its local array by randomly choosing servers and querying them for their queue-length. 
Consequently, the dispatchers have different views of the server queue-lengths. 
However, the performance of $LSQ$ and $LED$ depends on the dispatchers' local arrays being weakly correlated. When the dispatchers' queue length information is even partially correlated, both $LSQ$ and~$LED$ incur herding~\cite{DISC2020}. 

As discussed in the Introduction, the seemingly paradoxical herding behavior was recently addressed in the case of homogeneous systems in~\cite{DISC2020}. Specifically, they introduced the {\it  tidal-water-filling} policy (denoted $TWF$). 
In $TWF$, the dispatching policy of a dispatcher is defined by the probabilities at which it sends each arriving request to each server. The main idea for utilizing accurate server queue-lengths information without incurring herding relies on stochastic coordination of the dispatchers --- i.e., setting these probabilities such that all the dispatchers' decisions combined result in a balanced state. 
Nevertheless, $TWF$ does not account for server service rates. Consequently, its performance significantly degrades in a heterogeneous system, resulting in reduced resource utilization and excessively long response times.
This hinders its applicability to modern computer clusters. 


A related line of work dealing with load balancing challenges in distributed systems is based on the balls-into-bins model \cite{adler1998parallel}, including extensions to dynamic or heterogeneous settings (\eg \cite{azar1994balanced,berenbrink2014balls}). In the balls-into-bins model, it is commonly possible to obtain more precise theoretical guarantees. Indeed, common approaches include regret minimization (\eg \cite{kleinberg2011load}) and adaptive techniques (\eg \cite{lenzen2011tight}). Nevertheless, this model is not aligned with our model (\eg we consider multiple dispatchers, stochastic arrivals at each dispatcher and stochastic departures at each server). As a result, their analysis does not apply in our model, and vice versa.

A seminal work in~\cite{karp1996efficient} deals with parallel accesses of CPUs to memory regions, trying to minimize collisions. Their solutions rely on hash functions to prevent the memory accesses from becoming the bottleneck for system performance. In our setting, using hash functions resembles random allocation which is known to be sub-optimal.    

Another related line of work concerns static load balancing,  where the goal is to assign jobs to servers in a distributed manner to optimize some balance metric and with a minimal number of communication rounds among the participants (usually in the CONGEST or LOCAL model)~\cite{assadi2020improved,czygrinow2012distributed,halldorsson2018distributed}. This model is not aligned with ours since we consider dynamic systems with the demand for immediate and independent decision making among the dispatchers to sustain the high incoming rate of client requests and the demand for low latency. In particular, in our model, the dispatchers do not interact.\footnote{In practice, any communication among the distributed dispatchers introduces additional processing and, more importantly, possibly unpredictable network delay. Therefore, the dispatchers' high rate of incoming jobs makes interaction among them highly undesirable and even not feasible, especially when the dispatchers are not co-located (i.e., not on the same machine). As a result, assuming that dispatchers do not interact is standard practice in our L7 cluster load balancing model~\cite{zhou2020asymptotically,vargaftik2020lsq,wang2018distributed,stolyar2017pull,stolyar2015pull,lu2011join,haproxy_po2,ngynx_po2,DISC2020}.}

\section{Model}\label{sec:Model}

We consider a system with a set $\setS$ of $n$ servers and a set $\setD$ of $m$ dispatchers. The system operates over discrete and synchronous rounds $t \in \Natz $. 
Each server $\serv\in\setS$ has its own FIFO queue\footnote{FIFO stands for first-in-first-out. Namely, client requests at each server are processed in the order at which they arrive at the queue. The order among client requests that arrive at the same time is arbitrary.} of pending client requests. 
We denote by $q_\serv(t)$ the number of client requests at server's $\serv$ queue at the beginning of round $t$. We assume that the values of $q_\serv(t)$ for all servers~$\serv$ are available in round~$t$ to all dispatchers, for all $t\in\mathbb{N}$.  Each round consists of three phases:

\begin{enumerate}

  \item \T{Arrivals.} Each dispatcher $\disp \in \setD$ has its own stochastic, independent and unknown client request arrival process. We denote by $a^{(\disp)}(t) \in \Natz $ the number of new client requests that exogenously arrive at dispatcher $\disp$ in the first phase of round $t$.
    
  \item \T{Dispatching.} In the second phase of a round, each dispatcher immediately and independently chooses a destination server for each received request and forwards the request to the chosen server queue for processing. We demote by $\receives^{(\disp)}_\serv(t)$ the number of requests dispatcher $\disp$ forwards to server $\serv$ at the second phase of round $t$, and by $\bar{a}_{\serv}(t) =\sum_{\disp\in\setD} \receives^{(\disp)}_\serv(t)$ the total number of requests server $\serv$ receives from all dispatchers.

  \item \T{Departures.} During the third phase, each server performs work and possibly completes requests. Completed requests immediately depart from the system. We denote by $c_\serv(t)$ the number of requests that server $\serv$ can complete during the third phase of round $t$, provided it has that many requests to process. We assume that~$c_\serv(t)$ is determined by an unknown independent stochastic process. We only assume that each server has some inherent expected time invariant processing rate (i.e., speed), and denote $\mathbb{E}[c_\serv(t)]=\rate_\serv$.

\end{enumerate}
Note that any processing system must adhere to the requirement that, on average, the sum of server processing rates must be sufficient to accommodate the sum of arriving client requests. We mathematically express this additional demand in context when appropriate. 
Finally, we use the terms \emph{job}  and \emph{client request} interchangeably.

\section{Solving the Dispatching Problem}
\label{sec:theoretical derivations}
Our goal is to devise an algorithm that leads to short response times and high resource utilization. 
In this section, we define a notion of an ``ideal'' assignment for an online dispatching algorithm.
Intuitively, the quality of a dispatching assignment can be assessed by comparing it with the ideal one. 
A key element in our distributed solution will be solving an optimization problem whose goal is to approximate the ideal assignment as well as possible.

\subsection{Ideally Balanced Assignment}
For each round $t$ we are given the current sizes of the queues at the servers $\big(q_1(t),\ldots,q_n(t)\big)$ and the arrivals at the dispatchers $\big(a^{(1)}(t),\ldots,a^{(m)}(t)\big)$, and we should return an assignment of the incoming jobs to the servers $\big(\receives_1(t),\ldots,\receives_n(t)\big)$. (From here on, we omit the round notation~$t$ when clear from context.)
We would like to distribute the incoming jobs in a manner that balances the {\em load}, which we think of as the amount of work that each server has after the assignment. 
Since servers have different speeds, the load on a server~$\serv$ does not correspond to the number of jobs~$q_\serv + \receives_\serv$  in its queue (at the end of the round). 
Rather, the load is taken to be $\frac{q_\serv+\receives_\serv}{\rate_\serv}$, i.e., the expected amount of time it would take~$\serv$ to process the jobs that are in its queue. (See \Cref{fig:example1} for illustration.)
In general, no assignment that would completely balance the load among all servers will necessarily exist, since server queue-lengths may vary considerably at the start of a round.
We can, however, aim at minimizing the difference between the load of the most and least loaded servers.
If the units of incoming work were continuous, this would be achieved by an assignment $\{\receives_\serv\}_{\serv\in \setS}$ that solves:
\begin{equation}\label{eq:iba problem}
\begin{aligned}
        \max\min\limits_{\serv\in\setS}
        \frac{q_\serv + \receives_\serv}{\rate_\serv} \qquad\quad
        \textrm{s.t.} \quad\quad  \sum_{\serv\in\setS} \receives_\serv = \sum_{\disp\in\setD} a^{(\disp)} \text{ \ \ and}\quad
            \receives_\serv \ge 0 \,\, \forall \serv \in \setS. 
\end{aligned}
\end{equation}
%
\begin{figure}[t]
\centering
\begin{subfigure}{.3\textwidth}
  \centering
  \includegraphics[width=0.7\linewidth]{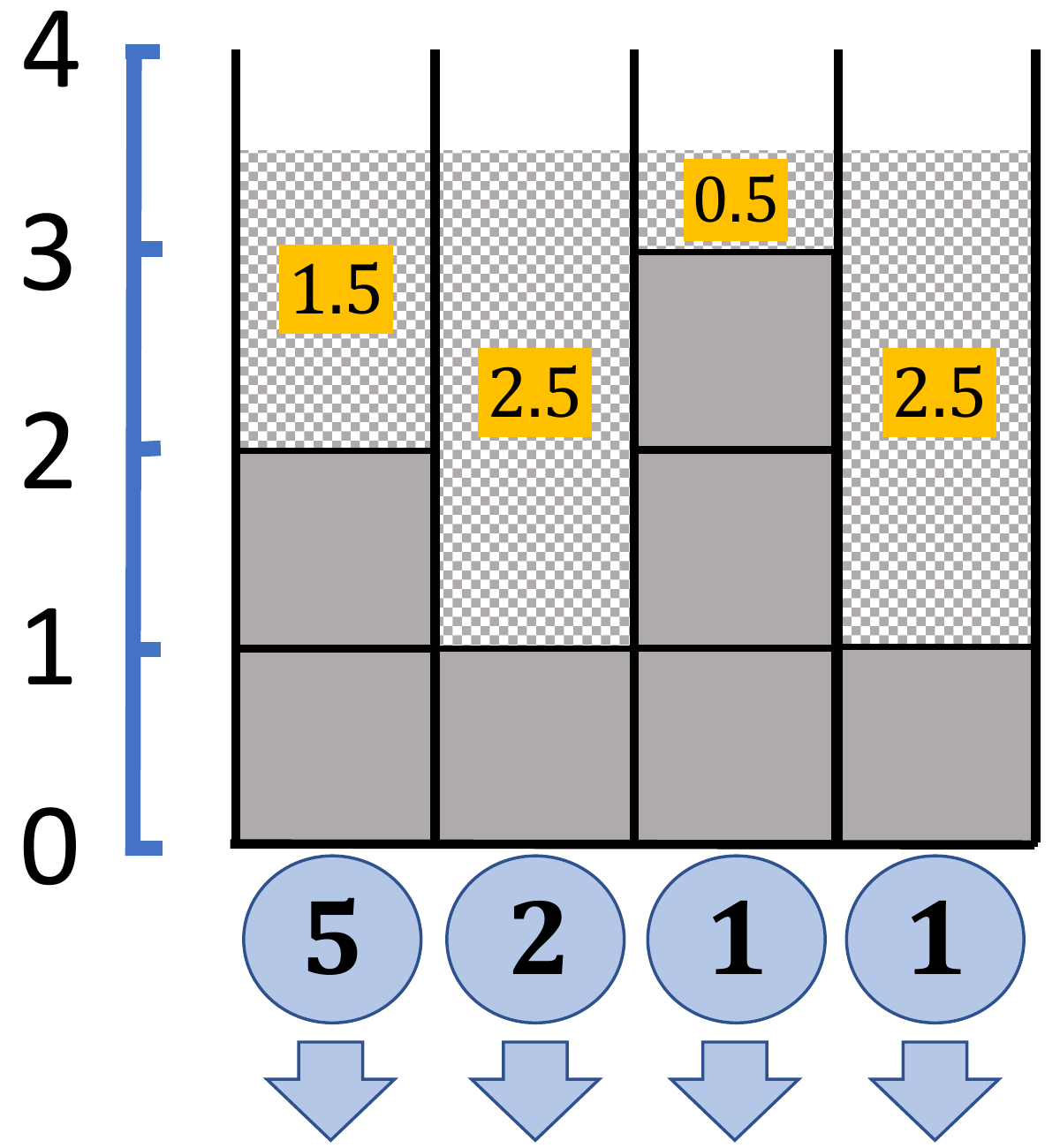}  
  \caption{\footnotesize Ideally balanced queue-lengths.}
  \label{fig:ex1:wl}
\end{subfigure}
\quad\quad
\begin{subfigure}{.58\textwidth}
  \centering
  \includegraphics[width=0.7\linewidth]{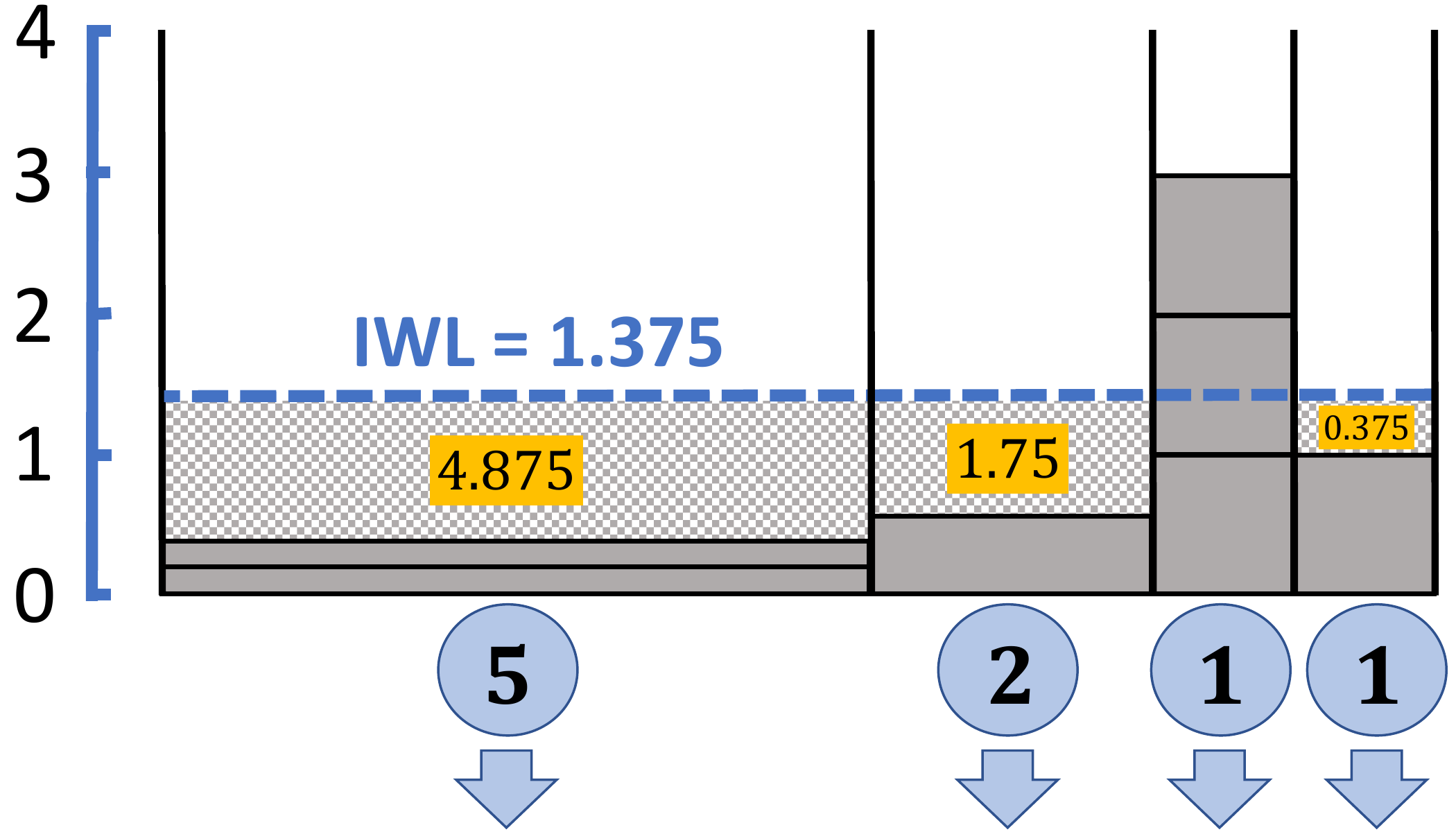}  
  \caption{\footnotesize Ideally balanced workload.}
  \label{fig:ex1:iwl}
\end{subfigure}
\caption{Illustrating the difference between balancing the number of jobs at the servers and balancing the servers' workload. 
An example with 4 servers with rates $[5,2,1,1]$ (from left to right), 7 queuing jobs at the servers $[2,1,3,1]$ and 7 new job arrivals.
An ideally balanced assignment is $[4.875,1.75,0,0.375]$, which differs from $[1.5,2.5,0.5,2.5]$ that balances the number of jobs per server.}
\label{fig:example1}
\end{figure}

We call an assignment that satisfies \cref{eq:iba problem} an  \textit{ideally balanced assignment} (\iba{} for short), and the value of the target function the \textit{ideal workload} (\iwl{}).
An illustrative example for an \iba{} is presented in \Cref{fig:ex1:iwl}.
The \iba{} is an idealized goal in the sense that it is not always possible to achieve.
This is because jobs are discrete and cannot be split among different servers.
Therefore, a realistic load balancing algorithm should strive to assign jobs in a manner that is as close as possible to the \iba{} by some distance measure (e.g., the Euclidean norm).

In a \emph{centralized dispatching system} (with a single dispatcher) minimizing the Euclidean norm distance
from the \iwl{} can be achieved in a straightforward manner.
The dispatcher sends each job to the server that is expected to process it the earliest.
Namely, to the server with the currently minimal $\frac{q_\serv+1}{\rate_\serv}$. 
It then updates the server's queue length and moves on to the next job.
In a \emph{distributed dispatching system} (i.e., with multiple dispatchers), however, the solution is considerably more complex.
In particular, finding the best approximation to the \iba{} in systems with multiple dispatchers requires exact coordination among the dispatchers.
However, dispatchers must make immediate and independent decisions. 
\mbox{They have no time to communicate and so such exact coordination is impossible.}

Following~\cite{DISC2020}, we deal with the need to coordinate dispatchers' decisions while keeping them independent by randomizing their decisions.
In essence, a dispatcher computes a probability distribution $P=[p_1,\ldots,p_n]$ in each round.
Each job's destination is then drawn according to $P$, thus making the decisions independent.
The probabilities in $P$ take into consideration both server loads and what other dispatchers might draw. 
In~\cite{DISC2020}, the authors name this approach stochastic coordination.
The main challenge in stochastic coordination is identifying the right probabilities, and computing them.

We now turn our focus to computing the \iwl.
Note that the \iwl{} determines the \iba{} ($=\{\receives_\serv\}_{\serv\in \setS}$) since for all $\serv\in\setS$:
\begin{equation}\label{eq:iba from iwl}
\begin{aligned}
        \receives_\serv ~=~ \rate_\serv \cdot \max\left\{\frac{q_\serv}{\rate_\serv},\, \iwl \right\} - q_\serv. 
\end{aligned}
\end{equation}
We develop an algorithm that computes the \iwl{} in a heterogeneous system.
The pseudocode appears in \Cref{alg:iba} in \Cref{app:alg:iba}.
Roughly speaking, \Cref{alg:iba} works as follows.
It starts with the current state of the system and iteratively assigns work that increases the minimal load, i.e., $\min\frac{q_\serv+\receives_\serv}{\rate_\serv}$ until it reaches the \iwl.
At each iteration, unassigned work is assigned to the least loaded servers until they reach the closest higher load.
The algorithm returns when no unassigned work remains.
Moreover, \Cref{alg:iba} is efficient. That is, it runs in $O(n)$ time, if the servers are pre-sorted by $\frac{q_\serv}{\rate_\serv}$ (otherwise, its complexity is dominated by the task of sorting these~$n$ values).
In the algorithm and henceforth, we often use `$a$' as a shorthand for the total sum of arrivals:\, $a\,\triangleq \sum\limits_{\disp\in\setD}\!\!{a^{(\disp)}}$.

\subsection{Distributed Load Balancing as a Stochastic Optimization Problem}
Our distributed load balancing algorithm will be based on the solution to a stochastic optimization problem.
The first step in the statement of the optimization problem is to define an appropriate error function that we seek to minimize.
A job assignment is measured against the ideally balanced assignment (\iba).
Assume that at each time slot~$t$ we are given the arrivals $\big(a^{(1)},\ldots,a^{(m)}\big)$ in the current round,%
\footnote{We forego the assumption that the entire vector of arrivals is available in \Cref{sec:putting it all together}.}
and we know the current sizes of the queues at the servers $\big(q_1,\ldots,q_n\big)$.
We can deduce the ideal workload (\iwl) of a centralized \iba{} algorithm at that point (see \Cref{alg:iba}).
Intuitively, a large deviation from the \iwl{} in server $\serv$ corresponds either to large job delays and slow response times (for a higher load than ideal), or lesser resource utilization (for a lower load than ideal).
Since we seek to avoid long delay tails and wasted server capacities, the error for a large deviation from the \iwl{} should be higher than the error for several small deviations.
On the other hand, minimizing the worst-case assignment, that is, focusing too much on the worst possible deviation, can damage the mean performance.
We measure the distance a solution offers from the ideal solution in terms of the $L_2$ norm (squared distances).
This balances the algorithm's mean and worst-case performance and, moreover, is amenable to formal analysis.
Thus, the individual error of server $\serv$ is defined as 
\begin{equation}\label{eq:individual_error}
        error_\serv = 
        \left(\frac{q_\serv + \receives_\serv}{\rate_\serv} - \wl\right)^2.
\end{equation}
Finally, to account for the variability in processing power, a server's error is weighted by multiplying it by the server's processing speed.
For example, an error of $+1$ in the workload of a server with a processing rate of $\mu_\serv=10$ results in $10$ jobs that will now wait for an expected extra round. 
The same $+1$ error for a server with $\mu_\serv=1$ affects only a single job.
The resulting total error of an assignment $\{ \receives_1, \ldots, \receives_n \}$ is 
\begin{equation}\label{eq:l2_error}
\begin{aligned}
        error = 
            \sum_{\serv\in \setS} \rate_\serv \cdot error_\serv
            =
            \sum_{\serv\in \setS} \rate_\serv \left(\frac{q_\serv + \receives_\serv}{\rate_\serv} - \wl\right)^2
            =\sum_{\serv\in \setS} \frac{(q_\serv + \receives_\serv -\, \rate_\serv\!\cdot\!\wl)^2} {\rate_\serv}.
\end{aligned}
\end{equation}

Recall that we employ randomness to determine job destinations. 
Therefore, an assignment $\{\receives_\serv\}_{\serv\in \setS}$ and the corresponding error are random variables.
In particular, we seek the probabilities $P\triangleq [p_1, \ldots, p_n ]$ that minimize the expected error function given the current queue lengths $[ q_1,\ldots,q_n ]$ and the new arrivals \mbox{$[ a^{(1)},\ldots, a^{(m)} ]$}. Formally,
\begin{equation}\label{eq:optimization problem}
\begin{aligned}
        \arg \min_P \mathbb{E} [error] &= \arg \min_P\mathbb{E} \left[\sum_{\serv\in \setS} \frac{\left(\receives_\serv + (q_\serv - \rate_\serv\wl)\right)^2} {\rate_\serv}\right] \\
        &= \arg \min_P \left( \sum_{\serv\in \setS}\mathbb{E} \left[\frac{\receives_\serv^2}{\rate_\serv}\right]
         + 2\sum_{\serv\in \setS}\mathbb{E} \left[\frac{\receives_\serv\cdot(q_\serv - \rate_\serv\wl)}{\rate_\serv}\right]
         + \sum_{\serv\in \setS}\mathbb{E} \left[\frac{(q_\serv - \rate_\serv\wl)^2}{\rate_\serv}\right] \right).
\end{aligned}
\end{equation}
In the error function only the values of $\{\receives_\serv\}$ are affected by $P$. Therefore, we can drop all additive constants and take the expectation on the  $\{\receives_\serv\}$ terms only. This yields the simplified form
\begin{equation}\label{eq:simplyfied optimization problem 0}
\begin{aligned}
        \arg \min_P \mathbb{E} [error] &= \arg \min_P \left( \sum_{\serv\in \setS} \frac{1}{\rate_\serv} \mathbb{E} [\receives_\serv^2] + 2 \sum_{\serv\in \setS} \frac{q_\serv - \rate_\serv\wl}{\rate_\serv} \mathbb{E} [\receives_\serv] \right).
\end{aligned}
\end{equation}
Since job destinations are drawn independently according to $P$ we have that $\receives_\serv$ is a binomial random variable with
\begin{equation}\label{eq:mean and var}
\begin{aligned}
        \mathbb{E} [\receives_\serv] &= p_\serv\sum_{\disp\in\setD}a^{(\disp)} = a p_\serv  \quad\quad~\textrm{and~~} \quad
        \mathbb{E} [\receives_\serv^2] &= a\cdot p_\serv(1-p_\serv) + a^2 p_\serv^2 .
\end{aligned}
\end{equation}
Plugging the above in \cref{eq:simplyfied optimization problem 0} yields
\begin{equation}\label{eq:simplyfied optimization problem}
\begin{aligned}
        \arg \min_P \mathbb{E} [error] &= \arg \min_P \left( \sum_{\serv\in \setS} \frac{1}{\rate_\serv} \cdot (ap_\serv-ap_\serv^2 + a^2 p_\serv^2) + 2 \sum_{\serv\in \setS} \frac{q_\serv - \rate_\serv\wl}{\rate_\serv} \cdot a p_\serv  \right)\\
        &= \arg \min_P \left( a(a-1)\sum_{\serv\in \setS} \frac{1}{\rate_\serv} \cdot p_\serv^2 + a \sum_{\serv\in \setS} \frac{2(q_\serv - \rate_\serv\wl) + 1}{\rate_\serv} \cdot p_\serv \right).
\end{aligned}
\end{equation}

When $a=1$ the first term is eliminated, and we obtain:
\begin{equation}\label{eq:simplyfied optimization problem a=1}
\begin{aligned}
        \arg \min_P \mathbb{E} [error] =&~ \arg \min_P \sum_{\serv\in \setS} \frac{2(q_\serv - \rate_\serv\wl) + 1}{\rate_\serv} \cdot p_\serv 
        =~ \arg \min_P \sum_{\serv\in \setS} \left( \frac{2q_\serv + 1}{\rate_\serv} -2\wl \right) \cdot p_\serv 
\end{aligned}
\end{equation}
The solution is to divide the probabilities among the servers that have the minimal value of $\frac{2q_\serv + 1}{\rate_\serv}$. (The division can be arbitrary; any division of probabilities will do.)

We turn to the general case in which $a>1$.
Dividing the target function at \cref{eq:simplyfied optimization problem} by $a$, we obtain a constrained optimization problem with the following standard form.
\begin{equation}\label{eq:standard form}
\begin{aligned}
\underset{P}{\textrm{minimize}} \quad & f(P) = (a-1)\sum_{\serv\in \setS} \frac{1}{\rate_\serv} \cdot p_\serv^2 +  \sum_{\serv\in \setS} \frac{2(q_\serv - \rate_\serv\wl) + 1}{\rate_\serv} \cdot p_\serv\\
\textrm{subject to} \quad & \sum_{\serv\in\setS} p_\serv - 1 = 0 ,\\
& p_\serv \ge 0 \,\, \forall \serv \in \setS. 
\end{aligned}
\end{equation}
Using the solution $P=[p_1,\ldots,p_n]$ to the above problem yields an optimal greedy online algorithm, in which the best stochastic per-round decisions are made.
Interestingly, an optimal $P$ can contain positive probabilities even to servers that are above the \iwl.
This is in stark contrast to \cite{DISC2020}, where, underlying the formal analysis is the basic assumption that servers that have a current load of more than the \iwl%
\footnote{The work of \cite{DISC2020} uses a term called \textit{water-level}. In the special case of homogeneous systems, the water level coincides with the \iwl.}
should have 0 probability of receiving any jobs.
\Cref{fig:ex2:scd} provides an illustrative example of what an optimal $P$ is expected to yield.
This example shows an optimal $P$ in which a server that is above the \iwl{} receives a positive probability of $\approx 0.221$ (i.e., $\approx\frac{1.55}{7})$.

The optimization problem in \cref{eq:standard form} is convex quadratic with affine constraints and has a non-empty set of feasible solutions.
Known algorithms computing exact solutions to this problem incur an exponential in $n$ (worst case) time complexity. 
Using well-established algorithms for quadratic programming such as the interior-point method and the ellipsoid method \cite{boyd2004convex}, it is possible to compute approximate solutions in $\Omega(n^3)$. However, this is still not good enough to be used for dispatching in the high-volume settings that we are targeting.

\section{Deriving a Computationally Efficient Solution}
\label{sec:alg comp}
Real-time load balancing systems are reluctant to deploy algorithms that might incur exponential time complexity.
Even the cubic complexity, of the approximation techniques, might become impractical for systems with a few hundred servers.
As we will show, it is possible to identify particular properties of our optimization problem \cref{eq:standard form}, and utilize them to design an algorithm with optimal time complexity.


\begin{figure}[t]
\centering
\begin{subfigure}{.47\textwidth}
  \centering
  \includegraphics[width=\linewidth]{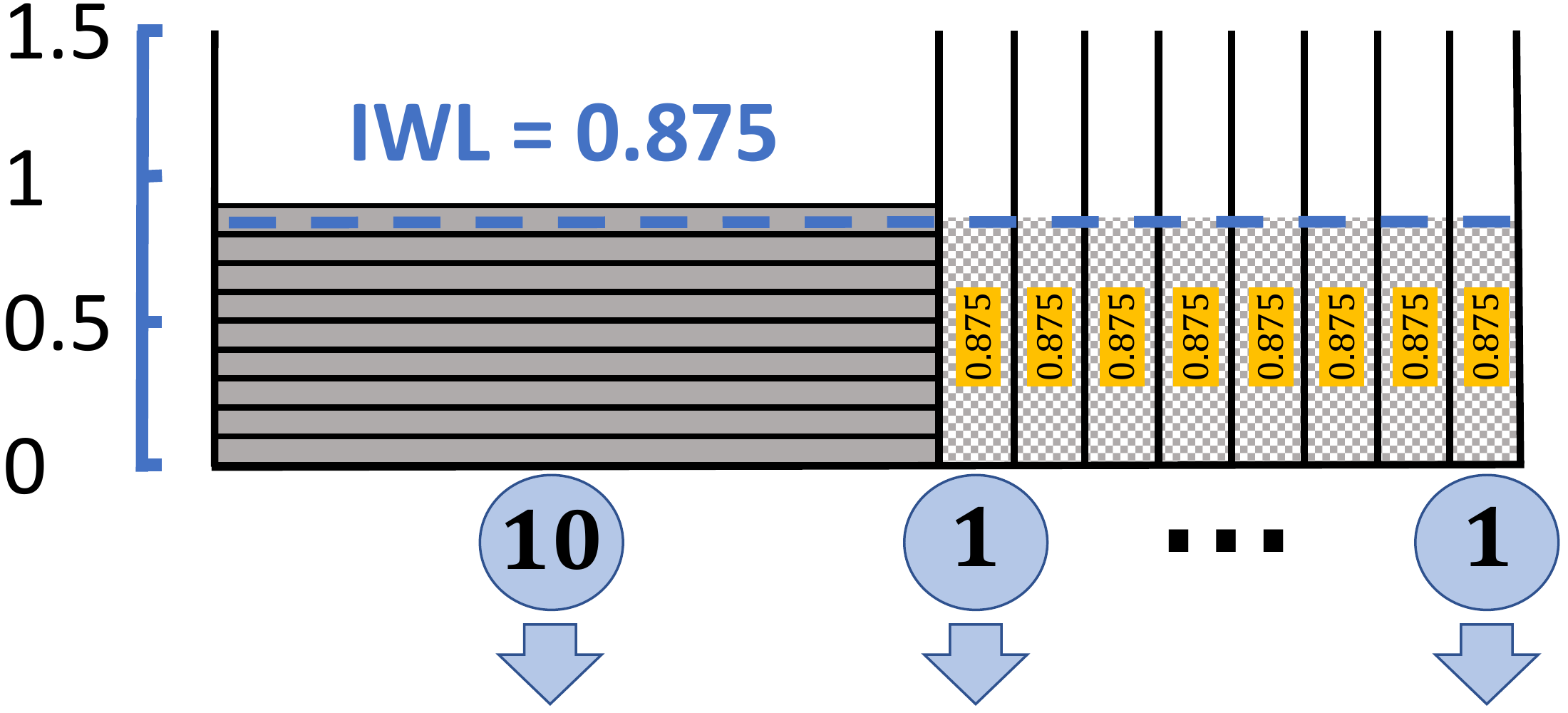}  
  \caption{\footnotesize Ideally balanced workload.}
  \label{fig:ex2:iwl}
\end{subfigure}
\quad\,
\begin{subfigure}{.47\textwidth}
  \centering
  \includegraphics[width=\linewidth]{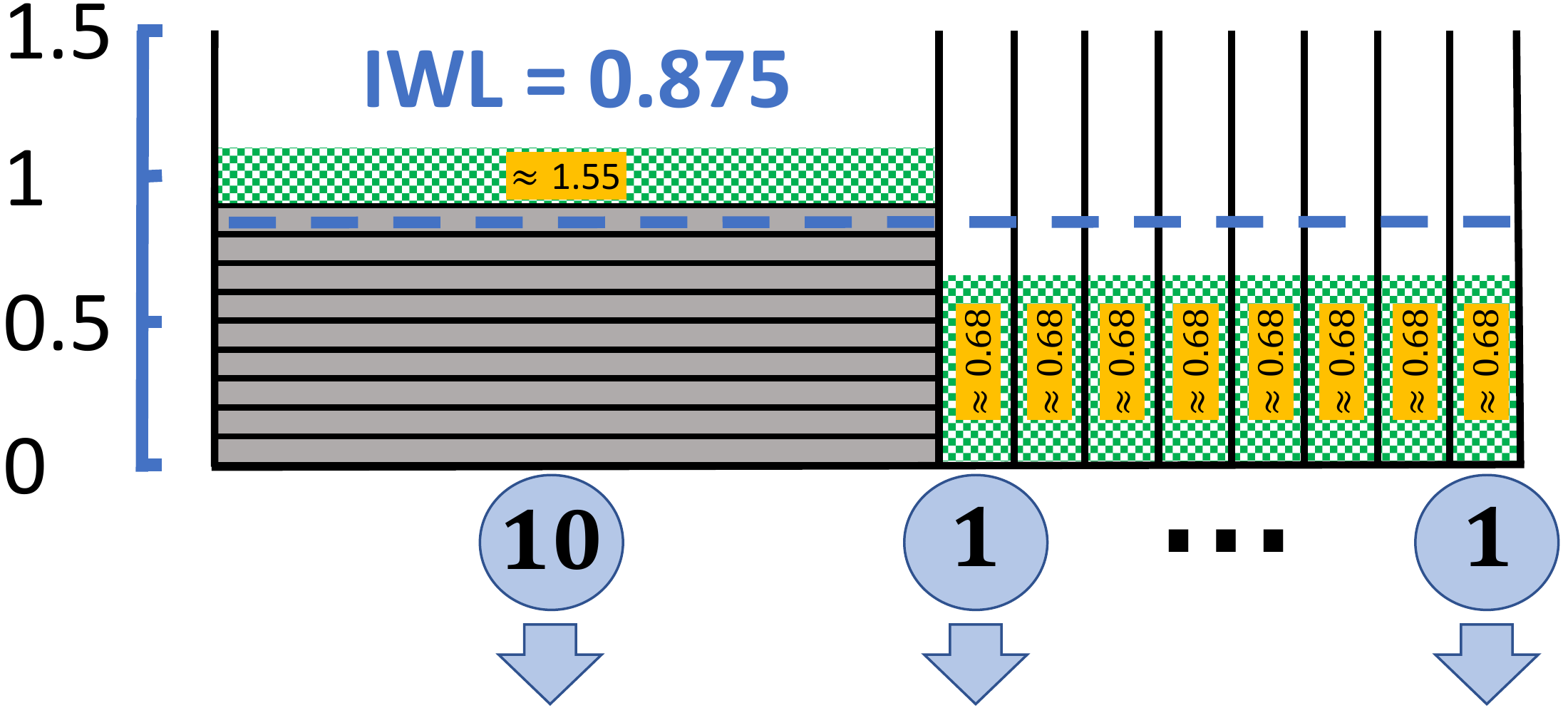}  
  \caption{\footnotesize Expected workload of an optimal $SCD$ dispatching.}
  \label{fig:ex2:scd}
\end{subfigure}
\caption{An example of a system with one fast ($\mu=10$) server, and 8 slower servers ($\mu=1$). The system state is 9 jobs queuing at the fast server, empty queues at the remaining servers, and 7 incoming jobs.}
\label{fig:example2}
\end{figure}


\subsection{The Probable Set and its Ordering}
Recall that $a=1$ corresponds to a single job entering the system in the current round. 
In this case, no coordination between dispatchers is necessary, and indeed, as we have shown, the dispatching problem can be solved in a straightforward manner. 
The distributed problem arises only when $a>1$, in which case we need to solve the optimization problem of \cref{eq:standard form}.   
The  Lagrangian function corresponding to \cref{eq:standard form} is
\begin{equation}
    \begin{split}
        L(P,\Lambda) = &~(a-1)\sum_{\serv\in \setS} \frac{1}{\rate_\serv} \cdot p_\serv^2 +  \sum_{\serv\in \setS} \frac{2(q_\serv - \rate_\serv\wl) + 1}{\rate_\serv} \cdot p_\serv ~-~ \sum_{\serv\in\setS} \Lambda_\serv p_\serv ~+~ \Lambda_0 (\sum_{\serv\in\setS} p_\serv - 1),
    \end{split}
\end{equation}
where $\Lambda_0$ is the Lagrange multiplier that corresponds to the equality constraint and $\set{\Lambda_\serv}_{\serv\in\setS}$ correspond to the inequality constraints.
Since the problem is convex with affine constraints, the Karush-Kuhn-Tucker method (KKT) \cite{karush1939minima,kuhn1951} theorem states that the following conditions are necessary and sufficient for $P^*=[p_1^*,\ldots,p_n^*]$ to be an optimal solution.
For all $\serv\in\setS$:  
\begin{equation}\label{eq:kkt_conditions}
\begin{aligned}
&\frac{\partial L}{\partial p_\serv}(p_\serv^*) = 2(a-1) \frac{1}{\rate_\serv} p_\serv^* + \frac{2(q_\serv - \rate_\serv\wl) + 1}{\rate_\serv} - \Lambda_\serv + \Lambda_0 = 0  
& \text{(Stationarity)} \cr
&\sum_{\serv\in\setS} p_\serv^* - 1 = 0 \,\, \text{and } p_\serv^* \ge 0
&\text{(Primal feasibility)}\cr
& \Lambda_\serv \ge 0 &\text{(Dual feasibility)}\cr
& p_\serv^* \Lambda_\serv = 0
& \text{(Complementary slackness)}
\end{aligned}
\end{equation}

From Stationarity we can deduce that 
\begin{equation}\label{eq:general p_s}
     p_\serv^*  = 
    \frac{- 2(q_\serv - \rate_\serv\wl) - 1 +  \rate_\serv \Lambda_\serv - \rate_\serv \Lambda_0}{2(a-1)}.
\end{equation}
We call the set of servers with positive probabilities in the optimal solution the \textit{probable set} and denote it by $\setSp\!$. 
Formally~$\setSp \triangleq \{\serv\in\setS \mid p_\serv^* > 0 \}$.
As we shall see, $\setSp$ plays an important role in our derivations.
In particular, the Complementary slackness condition from \cref{eq:kkt_conditions} implies that $\Lambda_\serv = 0$ for every $\serv \in \setSp$.
Thus,
\begin{equation}\label{eq:positive p_s}
     p_\serv^*  = 
    \frac{- 2(q_\serv - \rate_\serv\wl) - 1 - \rate_\serv \Lambda_0}{2(a-1)}
    ,\quad \forall \serv\in\setSp.
\end{equation}

We use the Primal feasibility condition from \cref{eq:kkt_conditions} together with \cref{eq:positive p_s} to obtain 
\begin{equation}\label{eq:1 equality}
    1 = \sum_{\serv \in \setS} p_\serv^* 
      = \sum_{\serv \in \setSp} p_\serv^*
      = \sum_{\serv \in \setSp} \frac{- 2(q_\serv - \rate_\serv\wl) - 1 - \rate_\serv \Lambda_0}{2(a-1)},
\end{equation}
from which we can isolate $\Lambda_0$
\begin{equation}\label{eq:Lambda_0}
    \begin{aligned}
         %
        \Lambda_0  ~=~ &  \frac{2\sum\limits_{\serv \in \setSp}(\rate_\serv\wl - q_\serv) - \sum\limits_{\serv \in \setSp}1 - 2(a-1)} {\sum\limits_{\serv \in \setSp}\rate_\serv }.
    \end{aligned}
\end{equation}

The KKT conditions enabled us to derive \cref{eq:positive p_s} and \cref{eq:Lambda_0}, which show that identifying $\setSp$ provides an analytical solution to our optimization problem \cref{eq:standard form}.
(By first calculating $\Lambda_0$ and then each of the positive probabilities.)
Hence, finding an optimal solution reduces to finding the probable set.
This result is a complementary instance of the generic active set method for quadratic programming \cite{nocedal2006numerical}.
However, the active set method provides, in general, a worst case running-time complexity of $2^n$.
And while for the specific instance of the problem with $\rate_\serv=\rate$ for all servers (i.e., an homogeneous system) we can derive $\setSp = \{\serv\in\setS \mid \frac{q_\serv}{\rate_\serv} < \iwl \}$ analytically, the example in \Cref{fig:ex2:scd} shows that this is no longer true in the general, heterogeneous, case.

Instead of deriving $\setSp$ analytically, we turn to find an algorithmic solution. 
A trivial algorithm is to examine each of the $2^{n}$ possible subsets of servers. 
For each candidate subset: first calculate $\Lambda_0$ according to \cref{eq:Lambda_0} --- this guarantees that the sum of probabilities is 1, then test whether all the probabilities are indeed positive, and finally calculate the objective function.
Clearly, its exponential computation complexity renders this method infeasible.
To overcome the exponential nature of searching in a domain of size $2^n$, we must reduce the size of the domain we search in.
\Cref{lem:ordering} formulates a property of the objective function in \cref{eq:standard form} that holds the key to reducing the size of the search.

\begin{restatable}{lemma}{orderingLemma}
\label{lem:ordering}
    Let~$\servT$ be a server in the probable set~$\setSp$. For every server~$\servU\in\setS$, if  $~~\frac{2q_{\servT} + 1}{\rate_{\servT}} \ge \frac{2q_{\servU} + 1}{\rate_{\servU}}$ then $\servU\in\setS^{+}$ as well.
\end{restatable}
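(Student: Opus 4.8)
The plan is to convert the KKT system \cref{eq:kkt_conditions} into an explicit \emph{threshold characterization} of the probable set, after which the lemma is immediate. Fix an optimal solution $P^*$ together with a choice of KKT multipliers $\Lambda_0$ and $\{\Lambda_\serv\}_{\serv\in\setS}$, and abbreviate $\tau \triangleq 2\wl-\Lambda_0$. Throughout, recall that the distributed case we are in means $a>1$, hence $2(a-1)>0$, and that each $\rate_\serv>0$; this makes all the divisions below sign-preserving.

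The first step is to show $\serv\in\setSp \Longrightarrow \frac{2q_\serv+1}{\rate_\serv}<\tau$. For such a server, complementary slackness forces $\Lambda_\serv=0$, so \cref{eq:positive p_s} applies; since $p_\serv^*>0$, the numerator $-2(q_\serv-\rate_\serv\wl)-1-\rate_\serv\Lambda_0$ is strictly positive, and dividing by $\rate_\serv$ and rearranging gives exactly $\frac{2q_\serv+1}{\rate_\serv}<\tau$. The second step is the converse: if $p_\serv^*=0$, then plugging this into the Stationarity line of \cref{eq:kkt_conditions} gives $\Lambda_\serv=\frac{2q_\serv+1}{\rate_\serv}-\tau$, and dual feasibility $\Lambda_\serv\ge0$ reads precisely $\frac{2q_\serv+1}{\rate_\serv}\ge\tau$. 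Combining the two steps yields the dichotomy $\serv\in\setSp \iff \frac{2q_\serv+1}{\rate_\serv}<\tau$.

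With this in hand the lemma follows in one line: $\servT\in\setSp$ gives $\frac{2q_{\servT}+1}{\rate_{\servT}}<\tau$, so the hypothesis $\frac{2q_{\servU}+1}{\rate_{\servU}}\le\frac{2q_{\servT}+1}{\rate_{\servT}}$ yields $\frac{2q_{\servU}+1}{\rate_{\servU}}<\tau$, i.e.\ $\servU\in\setSp$.

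I do not expect a genuine obstacle, since the hard work is already in place in the KKT derivation; the two points that need care are (i) $\Lambda_0$ is a single global multiplier, so the inequalities extracted at different servers are all compared against the \emph{same} threshold $\tau$ --- this is what lets $\servT$ and $\servU$ speak to each other --- and (ii) the asymmetry between the strict inequality obtained from $p_\serv^*>0$ and the weak one obtained from $\Lambda_\serv\ge0$, which is exactly what makes the characterization a clean dichotomy with no ambiguous boundary case. For completeness one should also note that $\setSp$ is well defined: since $a>1$ the coefficients $(a-1)/\rate_\serv$ are strictly positive, so $f$ is strictly convex, $P^*$ is unique, and hence so is $\setSp$, so the argument does not depend on which optimal solution or which multipliers one fixes.
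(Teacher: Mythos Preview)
Your argument is correct, and it is genuinely different from the paper's. The paper proves \Cref{lem:ordering} by a direct perturbation: assuming $p^*_{\servT}>0$ and $p^*_{\servU}=0$, it moves a small mass $\epsilon$ from $\servT$ to $\servU$, expands $f(P^*)-f(P)$ explicitly, and shows the difference is strictly positive for suitable $\epsilon$, contradicting optimality. No KKT multipliers enter that computation. You instead exploit the KKT system already derived in \cref{eq:kkt_conditions}--\cref{eq:positive p_s} to obtain the threshold characterization $\serv\in\setSp\iff\frac{2q_\serv+1}{\rate_\serv}<\tau$ with $\tau=2\wl-\Lambda_0$, from which the lemma is a one-line monotonicity check.

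What your route buys is extra structure: the threshold description of $\setSp$ is precisely what the paper later needs anyway (it is the content of \cref{eq:pos nom simp} and of the feasibility test in line~12 of \Cref{alg:opt}), so you effectively prove the lemma and that subsequent step in one stroke. The paper's perturbation argument, on the other hand, is self-contained in the sense that it appeals only to optimality of $P^*$ and not to the existence of KKT multipliers; it would go through even if one had not yet written down the Lagrangian. Given that the paper has already set up the KKT framework before stating the lemma, your approach is the more economical of the two.
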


\begin{proof}[Proof Sketch]
    The full proof is deferred to \Cref{app:proofs}. 
    Here we provide a proof sketch.
    Let $\servT$ and $\servU$ satisfy the assumption, and let $P^*=\{p^*_1,\ldots, p^*_n\}$ be the optimal solution for \cref{eq:standard form}.
    Assume by contradiction that $p^*_{\servT}>0$ but $p^*_{\servU}=0$.
    We show that there exists a feasible solution $P$ that obtains a lower value of the objective function, thus contradicting the optimality of $P^*$.
    Specifically, we show that for the positive constant~$\const = \min \{\frac{\rate_{\servU} (2q_{\servT} + 1) - \rate_{\servT} (2q_{\servU} + 1) + 2\rate_{\servT} \rate_{\servU} p^*_{\servT}} {(a-1)(\rate_{\servT} + \rate_{\servU}) }, p^*_{\servT}\}$, any $0< \epsilon < z$ and a solution $P=\{p_1, \ldots, p_n\}$ with $p_{\servU}=\epsilon, p_{\servT}=p^*_{\servT}-\epsilon$, and $p_\serv=p^*_\serv$ for all other servers, it holds that $P$ is a better feasible solution.
\end{proof}

\Cref{lem:ordering} imposes a very strict constraint on the structure of the probable set~$\setSp$: 
\begin{corollary}\label{corollary:n_subsets}
    Let $\serv_{i_1}, \serv_{i_2},\ldots,\serv_{i_n}$ be a listing of the  servers in~$\setS$ in non-decreasing sorted order of $\frac{2q_{\serv} + 1}{\rate_{\serv}}$. Moreover, denote $\setS_j=\{\serv_{i_1},\ldots,\serv_{i_j}\}$ for $j=1,\ldots,n$.  Then   $\setS^{+}=\setS_j$  ~~for some $j\le n$.
\end{corollary}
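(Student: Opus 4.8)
The plan is to obtain \Cref{corollary:n_subsets} as an almost immediate consequence of \Cref{lem:ordering}. The first observation I would make is that the probable set is non-empty: since $P^*$ is a probability distribution, $\sum_{\serv\in\setS}p^*_\serv = 1$, so at least one coordinate is strictly positive and hence $\setSp\ne\emptyset$. This guarantees that the index introduced in the next step actually exists.

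Next, with the servers listed as $\serv_{i_1},\ldots,\serv_{i_n}$ in non-decreasing order of the key $\frac{2q_\serv+1}{\rate_\serv}$, I would let $j$ be the largest index for which $\serv_{i_j}\in\setSp$, and claim that $\setSp=\setS_j$. The inclusion $\setSp\subseteq\setS_j$ holds by the maximality of $j$: every server in $\setSp$ has index at most $j$ in the listing. For the reverse inclusion, fix any $k\le j$. By the sorted order we have $\frac{2q_{\serv_{i_j}}+1}{\rate_{\serv_{i_j}}}\ge\frac{2q_{\serv_{i_k}}+1}{\rate_{\serv_{i_k}}}$, so applying \Cref{lem:ordering} with $\servT=\serv_{i_j}\in\setSp$ and $\servU=\serv_{i_k}$ yields $\serv_{i_k}\in\setSp$. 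Thus $\setS_j\subseteq\setSp$, and combining the two inclusions gives $\setSp=\setS_j$, as required.

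The only point that calls for any care --- and the closest thing here to an obstacle --- is the handling of ties in the sorting key, which arise when two servers have equal values of $\frac{2q_\serv+1}{\rate_\serv}$ and the listing order among them is arbitrary. This causes no trouble precisely because the hypothesis of \Cref{lem:ordering} is the non-strict inequality ``$\ge$'': applying the lemma in both directions shows that servers with equal key value are either all in $\setSp$ or all outside it, so the conclusion $\setSp=\setS_j$ is insensitive to how such ties are broken. Beyond this, the argument is a one-line corollary of the lemma, and it is exactly what reduces the search for $P^*$ from the $2^n$ subsets of $\setS$ to the $n$ prefixes $\setS_1,\ldots,\setS_n$.
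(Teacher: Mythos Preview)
Your proof is correct and follows exactly the approach the paper intends: the corollary is stated immediately after \Cref{lem:ordering} as a direct consequence, and your argument---choosing the largest index $j$ with $\serv_{i_j}\in\setSp$ and invoking the lemma for each $k\le j$---is precisely the spelled-out version of that one-line implication. Your remark on ties is accurate and worth noting, since the non-strict inequality in the lemma's hypothesis is what makes the conclusion well-defined regardless of the tie-breaking in the listing.
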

\Cref{corollary:n_subsets} allows us to find the optimal probabilities in polynomial time.
To be precise, both sorting $\setS$ and computing the \iwl{} take $O(n\log n)$ time. So we are left with computing for each of the $n$ subsets~$\setS_j$: (1)~whether it satisfies that all probabilities are non-negative, \ and (2)~What the value of the subset's objective function is.
Steps (1) and~(2) can be implemented in $O(n)$ time complexity each.
Finally, we extract the subset with the minimal objective function value from those that respect (1).
This can be done in $O(n^2)$ complexity, as presented in \Cref{alg:n2}.
\begin{algorithm}[ht]
  \DontPrintSemicolon
  \SetKwFunction{FMain}{ComputeProbabilities}
  \SetKwProg{Pn}{Function}{:}{\KwRet}
  \Pn{\FMain{$\setS$, $\setQ$, $\setDu$, $a$, \wl}}{
        $\setO \gets \emptyset$;\,\,$\val^* \gets \infty$\;
        $p_\serv\gets0\,\forall\serv\in\setS$\;
        \While{$\setS{\setminus}\setO \neq \emptyset$}
        {
            $\curr \gets \arg\min_{\serv\in \setS\setminus\setO} \frac{2q_\serv+1}{\rate_\serv}$\;
            $\setO \gets \setO \cup \curr$ \tcp*[r]{$\setO$ is the candidate set for $\setSp$}
            $\Lambda_0 \gets  \frac{2\sum_{\serv \in \setO}(\rate_\serv\wl - q_\serv) - \sum_{\serv \in \setO}1 - 2(a-1)} {\sum_{\serv \in \setO}\rate_\serv }$
            \tcp*[r]{according to \cref{eq:Lambda_0}}
            \For{$\serv\in\setO$}
            {
                $p_\serv \gets \frac{- 2(q_\serv - \rate_\serv\wl) - 1 - \rate_\serv \Lambda_0}{2(a-1)}$ \tcp*[r]{according to \cref{eq:positive p_s}}
                \If{$p_\serv<0$}
                    {go to line~4 \tcp*[r]{the solution is infeasible; continue to next $\curr$}}
            }
            $\val \gets (a-1)\sum_{\serv\in \setO} \frac{1}{\rate_\serv} \cdot p_\serv^2 +  \sum_{\serv\in \setO} \frac{2(q_\serv - \rate_\serv\wl) + 1}{\rate_\serv} \cdot p_\serv$ \tcp*[r]{according to \cref{eq:standard form}}
            \If{$\val < \val^*$} 
            {    
                $\val^* \gets \val$\;
                $P^*\gets \{p_\serv\}_{\serv\in\setS}$\;
            }              
        }
        \Return $P^*$\;
    }
    \caption{Find probabilities in $O(n^2)$ time.}
    \label{alg:n2}
    
\end{algorithm}

\subsection{Optimal Complexity}

By reduction from sorting, we have that any solution that induces a total order on the servers (including $JSQ$, $SED$ and $TWF$) must incur a complexity of $\Omega(n\log n)$.
The quadratic complexity of \Cref{alg:n2} is caused by the $\Omega(n)$ cost per iteration of the calculation in line~7, the {\bf for} loop in lines~8-11, and the summation in line~12. 
Since the outcome of an iteration depends on the results of past iterations, we can employ a dynamic programming approach and design an algorithm with an $O(n\log n)$, and hence optimal, complexity. 
We do this in \Cref{alg:opt}, whose pseudocode is presented in \Cref{app:alg:opt}.
We next explain the necessary steps in deriving \Cref{alg:opt}.

Replacing the calculation of $\Lambda_0$ in line~7 of~\Cref{alg:n2} by an $O(1)$ computation per iteration is quite straightforward.
We simply calculate the enumerator and denominator sums separately, and then divide them.
Each sum is computed by adding the current element to the sum from the previous iteration.
The {\bf for} loop in lines~8-11 of \Cref{alg:n2} tests whether the computed probabilities are indeed non-negative.
Since the denominator is a positive constant, it is sufficient to test the enumerator for each server in $\setO$. That is, whether it holds that $- 2(q_\serv - \rate_\serv\wl) - 1 - \rate_\serv \Lambda_0 \ge 0$.
The rates $\rate_\serv$ of the servers are positive. Therefore, dividing by $\rate_\serv$ and rearranging yields
\begin{equation}\label{eq:pos nom simp}
    \begin{aligned}
        \iwl - \Lambda_0 &\ge \frac{2 q_\serv + 1}{\rate_\serv}.
    \end{aligned}
\end{equation}
In turn, observe that \cref{eq:pos nom simp} holds for all servers in $\setO$ iff it holds for the server with the highest $\frac{2 q_\serv + 1}{\rate_\serv}$ value in~$\setO$, which is server~$\servT$ in each iteration.
We thus replace the $\Omega(n)$ complexity \textbf{for} loop of lines~8-11 in \Cref{alg:n2} by this single $O(1)$ complexity test.
Next, we address the summation in line~12 of \Cref{alg:n2} which computes the value of the objective function $f(P)$ from \cref{eq:standard form}.
Since $P$ contains $\Theta(n)$ elements, computing $P$ costs $\Omega(n)$ time.
However, we can make the computation of $P$ more efficient by using the following Lemma. 
\begin{restatable}{lemma}{lemTwo}
\label{lem:2}
The objective function in \cref{eq:standard form} satisfies
\begin{equation*}
\begin{aligned}
    f(P)
    =~~~ \Lambda_0^2\underbrace{\sum_{\serv \in \setSp} \frac{\rate_\serv}{4(a-1)}}_{v_1}
     ~~-~~ \underbrace{\sum_{\serv \in \setSp} \frac{(2(q_\serv - \rate_\serv\wl) + 1)^2 }{4\rate_\serv(a-1)}}_{v_2}
    .
\end{aligned}
\end{equation*}
\end{restatable}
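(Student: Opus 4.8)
The plan is to prove the identity by direct substitution, using the closed form for the optimal probabilities that the KKT analysis has already given us. Recall that $p_\serv^* = 0$ for every $\serv \notin \setSp$, so both sums in $f(P) = (a-1)\sum_{\serv\in \setS} \frac{1}{\rate_\serv} p_\serv^2 + \sum_{\serv\in \setS} \frac{2(q_\serv - \rate_\serv\wl)+1}{\rate_\serv} p_\serv$ immediately restrict to sums over $\setSp$. On $\setSp$ we have the expression from \cref{eq:positive p_s}, which for brevity I would write as $p_\serv^* = \frac{-b_\serv - \rate_\serv \Lambda_0}{2(a-1)}$ with $b_\serv \triangleq 2(q_\serv - \rate_\serv\wl) + 1$.

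Next I would compute the contribution of a single server $\serv \in \setSp$ to $f(P)$, namely $(a-1)\frac{1}{\rate_\serv}(p_\serv^*)^2 + \frac{b_\serv}{\rate_\serv} p_\serv^*$, by plugging in the above form for $p_\serv^*$ and factoring out the common factor $\frac{1}{4\rate_\serv(a-1)}$. The resulting numerator is $(b_\serv + \rate_\serv\Lambda_0)^2 - 2 b_\serv (b_\serv + \rate_\serv\Lambda_0)$, which factors as $(b_\serv + \rate_\serv\Lambda_0)(\rate_\serv\Lambda_0 - b_\serv) = \rate_\serv^2 \Lambda_0^2 - b_\serv^2$. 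Hence the per-server contribution equals $\frac{\rate_\serv \Lambda_0^2}{4(a-1)} - \frac{b_\serv^2}{4\rate_\serv(a-1)}$. Summing over $\serv \in \setSp$ and substituting back $b_\serv = 2(q_\serv - \rate_\serv\wl)+1$ yields exactly the stated formula, with the first group of terms forming $\Lambda_0^2 v_1$ and the second forming $-v_2$.

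The only point requiring care is the algebraic bookkeeping: one must keep the $\Lambda_0$-dependent and the $b_\serv$-dependent parts separate so that the difference-of-squares collapse is visible, and check that the cross term $b_\serv \rate_\serv \Lambda_0$ cancels exactly --- it does, since $(b_\serv + \rate_\serv\Lambda_0)^2$ contributes $+2 b_\serv\rate_\serv\Lambda_0$ while $-2b_\serv(b_\serv+\rate_\serv\Lambda_0)$ contributes $-2b_\serv\rate_\serv\Lambda_0$. There is no conceptual obstacle here; the lemma is a mechanical consequence of the stationarity equation, so I expect the ``hard part'' to be nothing more than writing the chain of equalities cleanly enough to expose the cancellation.
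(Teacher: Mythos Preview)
Your proposal is correct and follows essentially the same approach as the paper's proof: both restrict the sum to $\setSp$, substitute the KKT expression \cref{eq:positive p_s} for $p_\serv^*$, and algebraically simplify. Your shorthand $b_\serv$ and the difference-of-squares observation make the cancellation a bit cleaner than the paper's more verbose line-by-line expansion, but the argument is the same.
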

\begin{proof}
  See \Cref{app:proofs}.  
\end{proof}
\Cref{lem:2} enables to compute $v_1$ and $v_2$ in $O(1)$ operations per iteration, and reduce the $\Omega(n)$ cost of line~12 in \Cref{alg:n2} to $O(1)$.
\mbox{It follows that \Cref{alg:opt} runs in $O(n \log n)$, which is optimal.}

\section{Putting It All Together}
\label{sec:putting it all together}
We now show how the algorithms from \Cref{sec:theoretical derivations,sec:alg comp}, which compute the \iwl{} and the optimal probabilities, can be employed in the complete dispatching procedure given in \Cref{alg:scd}.
The complexity of an individual dispatcher's computation in any given round is $O(n\log n)$, and it is dominated by the sorting of~$n$ values in lines~2 and~3.
If the sorted order is available to both \Cref{alg:iba} and \Cref{alg:opt}, their running time is reduced to $O(n)$.
This is useful since (1) a designer may implement a sorted data structure in various ways according to what benefits her specific system's characteristics, and~(2) if queue-length information is available before the job arrivals, the server ordering can be precomputed, further improving the online complexity.

\subsection{Estimating the arrivals}

While the derivation of the optimal probabilities assumed knowledge of the arrivals $\{ a^{(1)},\ldots, a^{(m)} \}$, a dispatcher~$\disp$ only knows its own arrivals, i.e., $a^{(\disp)}$.
Nevertheless, we note that the optimal probabilities 
depend only on the total number $a \triangleq \sum\limits_{\disp\in\setD}a^{(\disp)}$ of jobs that arrive, and not on the individual values $a^{(\disp)}$.
We are thus left with estimating $a$.
There are numerous optional methods for estimating~$a$ (e.g., assuming that it is the maximal capacity of the system, ML estimators, etc.).
Following simple and elegant approach taken in \cite{DISC2020}, we have dispatcher $\disp$ estimate $a$ by assuming that everyone else is receiving the same number of jobs as~$\disp$:
\begin{equation}\label{eq:estimating a}
\begin{aligned}
        a_{\text{est},\disp} = m\cdot a^{(\disp)}.
\end{aligned}
\end{equation}
With this, \Cref{alg:scd} is fully defined and can be implemented.

One reason why such an estimation  scheme is effective is because the average estimation of the dispatchers exactly equals the total arrivals.
That is,
\begin{equation}\label{eq:average estimation}
\begin{aligned}
        \frac{1}{m}\cdot \sum_{\disp\in\setD} a_{\text{est},\disp} = \frac{1}{m}\cdot \sum_{\disp\in\setD} m a^{(\disp)} = \sum_{\disp\in\setD} a^{(\disp)} = a.
\end{aligned}
\end{equation}
Consequently, if some dispatchers overestimate the \iwl{} and therefore assign smaller probabilities to less loaded servers, then other dispatchers underestimate the \iwl{} and increase these probabilities. Roughly speaking, these deviations compensate for one another.
In \cref{sec:evaluation} we show how this simplistic estimation technique results in consistent state-of-the-art performance across many systems and metrics.

\subsection{Stability}

A formal guarantee for  dispatching algorithms that is often considered desirable in the literature is called \emph{stability}~(see \Cref{app:stabiltiy}).
Under mild assumptions on the stochastic nature of the arrival and service processes, stability ensures that the servers' queue-lengths will not grow unboundedly so long as the arrivals do not surpass all servers' total processing capacity. 
Accordingly, in Appendix~\ref{app:stabiltiy} we make the appropriate formal definitions and prove that $SCD$ is stable. Moreover, our stability proof holds not only when applying the estimation technique in \cref{eq:estimating a}, but also applies for any estimation technique in which $1 \le a_{\text{est},\disp} < \infty$. Intuitively, this is because for $a_{\text{est},\disp}=1$, our policy behaves similarly to $SED$, and as $a_{\text{est},\disp}\to\infty$ it approaches weighted-random. 
However, with any reasonable estimation (and in particular the one in \cref{eq:estimating a} that we employ), the $SCD$ procedure finds the best of both worlds.
It eliminates the herding phenomenon incurred by $SED$, while also sending sufficient work to the less loaded servers, unlike the load oblivious weighted-random.

\begin{algorithm}[ht]
  \DontPrintSemicolon
  \SetKwFunction{FMain}{Global}
  \SetKwProg{Pn}{Function}{:}{\KwRet}
  \SetKwInOut{Input}{input}
  \SetKwFor{UponKW}{upon}{do}{fintq}

    \Input{$\setS$, $\setDu$}
    \For{each round $t=1,2,\ldots$}{
    Update queue-lengths $\setQ$\;
    $\setS_1\gets$ Sort $\setS$ according to $\frac{q_\serv}{\rate_\serv}$\;
    $\setS_2\gets$ Sort $\setS$ according to $\frac{2q_\serv+1}{\rate_\serv}$\;
    
    \UponKW{receiving jobs $a^{(\disp)}$}{
        Estimate $a$ by $a_{\text{est},\disp}$\;
        $\wl\gets \textsc{ComputeIdealWorkLoad}(\setS_1, \setQ, \setDu, a_{\text{est},\disp})$\;
        $P\gets \textsc{ComputeProbabilities}(\setS_2, \setQ, \setDu, a_{\text{est},\disp}, \wl)$\;
        \For{each job in $a^{(\disp)}$}{
            draw a server $\serv$ according to $P$\;
            send job to server $\serv$\;
        }
    }
    }
    \caption{\SCD{} (SCD) Procedure.}
    \label{alg:scd}
\end{algorithm}

\section{Evaluation}
\label{sec:evaluation}

In this section, we conduct simulations and compare our solution to both well-established and to most recent state-of-the-art algorithms in the heterogeneous multi-dispatcher setting. The performance of load balancing techniques is usually evaluated at high loads, so that the job arrival rates are sufficiently high to utilize the servers. 

In such a setting, we are interested in two main performance criteria. The first criterion is the average response time over all client requests (i.e., jobs). For each request, we measure the number of rounds it spent in the system (i.e., from its arrival to a dispatcher to its departure from the server that processed it). The second criterion is the response times' tail distribution. This metric is crucial since it often represents client experience due to two reasons:  (1) clients may send multiple requests (e.g., browsing a website), and delaying a small subset of these may ruin client experience; (2) a client request may be broken into multiple smaller tasks and the request time will be determined by the last tasks to complete (e.g., search engines). In both cases, it is important to meet the desired tail latency for the 95th, 99th, or even the 99.9th percentile of the distribution \cite{dean2013tail,nishtala2017hipster,DISC2020}.    

\subsection{Setup} 

In our simulations, every run of each algorithm lasts for $10^5$ rounds. A round is composed of three phases as described in Section \ref{sec:Model}. Namely, first new requests arrive at the dispatchers. The dispatchers must then act immediately and independently from each other and dispatch each request to a server for processing. Finally, the servers perform work and possibly finish requests. Finished requests immediately depart from the system.  

Each dispatcher has its own arrival process of requests. Specifically, the number of requests that arrive to each dispatcher $\disp$ at each round $t$ is drawn from a Poisson distribution with parameter $\lambda$. Formally, ${a^{(\disp)}(t)\sim Pois(\lambda_\disp)}$. 
Each server has its own processing rate. Specifically, the number of requests that server $\serv$ can process in each round $t$ is geometrically distributed with parameter $\rate_\serv$. Formally, $c_\serv(t) \sim Geom(\frac{1}{1+\rate_\serv})$. 

Clearly, when the average number of requests that arrive at the system surpasses the average processing rate of all servers combined, the system is considered \emph{infeasible}, i.e., it cannot process all requests and must drop some of them. It is therefore a standard assumption that the system is \emph{admissible}~\cite{mitzenmacher2000useful,zhou2020asymptotically,vargaftik2020lsq,wang2018distributed,stolyar2017pull,mitzenmacher2016analyzing,stolyar2015pull,lu2011join,haproxy_po2,ngynx_po2,DISC2020}, i.e., that the total average arrival rate at the system is upper-bounded by the total processing capacity of all servers. Accordingly, this is a setting we are interested in examining.  
We define the {\em offered load} to be $$\rho=\frac{\mathbb{E}[\sum_{\disp\in\setD} a^{(\disp)}(0)]}{\mathbb{E}[\sum_{\serv\in\setS} c_\serv(0)]} = \frac{\sum_{\disp\in\setD} \lambda_\disp}{\sum_{\serv\in\setS} \mu_\serv}.$$ 
To be admissible, it must hold that $\rho<1$. Therefore, in all our simulations we test the performance of the different dispatching algorithms for $\rho\in(0,1)$.

We have implemented 10 different dispatching techniques in addition to ours. These include both well-established techniques and the most recent state-of-the-art techniques. In particular, we compare against $JSQ$~\cite{weber1978optimal,winston1977optimality,eryilmaz2012asymptotically}, $SED$~\cite{gardner2021scalable,jaleelgeneral,gardner2019smart,selen2016steady}, $JSQ(2)$~\cite{luczak2006maximum,vvedenskaya1996queueing,mitzenmacher2001power}, $WR$ (weighted random), $JIQ$~\cite{lu2011join,mitzenmacher2016analyzing,stolyar2017pull,stolyar2015pull,van2017load}, $LSQ$~\cite{vargaftik2020lsq,zhou2020asymptotically}, as well as $hJSQ(2)$, $hJIQ$ and $hLSQ$. The last three policies, i.e., $hJSQ(2)$, $hJIQ$ and $hLSQ$ are the adaptations of $JSQ(2)$, $JIQ$ and $LSQ$ to account for server heterogeneity.%
\footnote{Similarly to $SED$, the servers are ranked by their expected delay, i.e., by $\frac{q_\serv}{\mu_\serv}$, instead of by $q_\serv$. Likewise, when random sampling of servers occurs, servers are sampled proportionally to their processing rates rather than uniformly. Specifically, the probability to sample server $\serv$ is $\frac{\mu_\serv}{\sum_{\serv\in\setS} \mu_\serv}$ instead of $\frac{1}{n}$.}
We also compare against the recent $TWF$ policy of~\cite{DISC2020} that achieves stochastic coordination for homogeneous systems. For a fair comparison, in all our experiments, we use the same random seed across all algorithms, resulting in identical arrival and departure processes.  

For clarity, in the paper's main body, we present only the 6 most competitive algorithms out of the 10 implemented. We refer the reader to Appendix \ref{app:Response time} for additional simulation results that compare between $SCD$ and $JSQ(2)$, $JIQ$, $LSQ$ and $WR$ (weighted random). In a nutshell, the $JSQ(2)$, $JIQ$ and $LSQ$ algorithms are less competitive since they do not account for server heterogeneity, whereas $WR$ completely ignores queue length information and fails to leverage less loaded servers in a timely manner.

\begin{figure}[t]
    \centering
    \begin{subfigure}{0.84\textwidth}
      \centering
      \caption{Average response time.}
    \includegraphics[width=\linewidth]{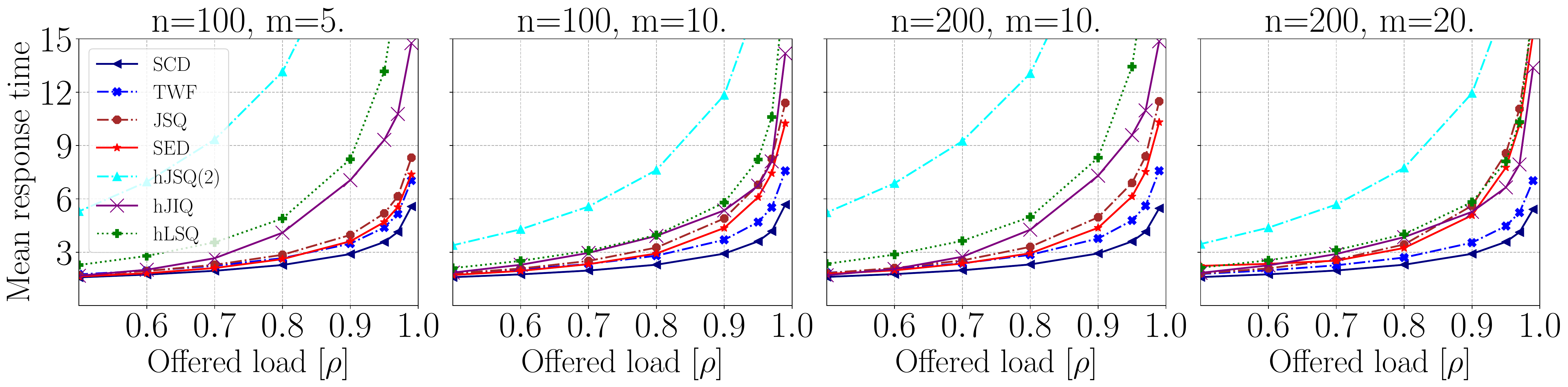}
     \label{fig:evaluation:moderate:loadsweep}
    \end{subfigure}
    \begin{subfigure}{0.84\textwidth}
      \centering
      \caption{Response time delay tail.}
    \includegraphics[width=\linewidth]{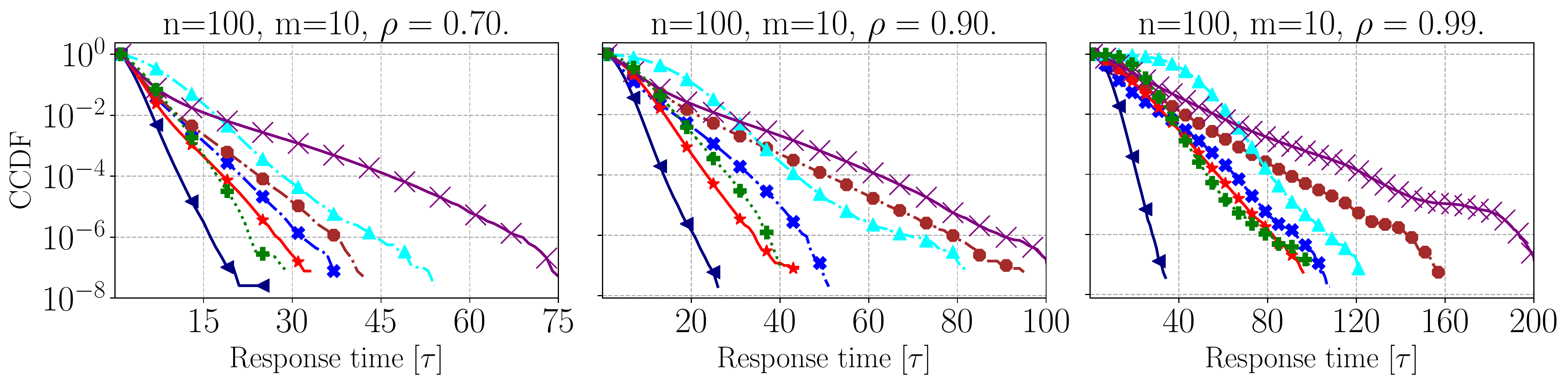}  
     \label{fig:evaluation:moderate:delaytail}
    \end{subfigure}
    \caption{ The service rate (i.e., speed) of each server is randomly drawn from the real interval [1,10]. (a) Average request response time as a function of the offered load over four different systems. The $x$-axis represents the offered load $\rho$. The $y$-axis represents the average response time in number of rounds. (b) Response-time tail distribution over a system with 100 servers and 10 dispatchers over three different offered loads. The $x$-axis represents the response time in number of rounds (denoted by $\tau$). The $y$-axis represents the complementary cumulative distribution function (CCDF). }
    \label{fig:evaluation:moderate}
\end{figure}

\begin{figure}[t]
    \centering
    \begin{subfigure}{0.84\textwidth}
      \centering
      \caption{Average response time.}
    \includegraphics[width=\linewidth]{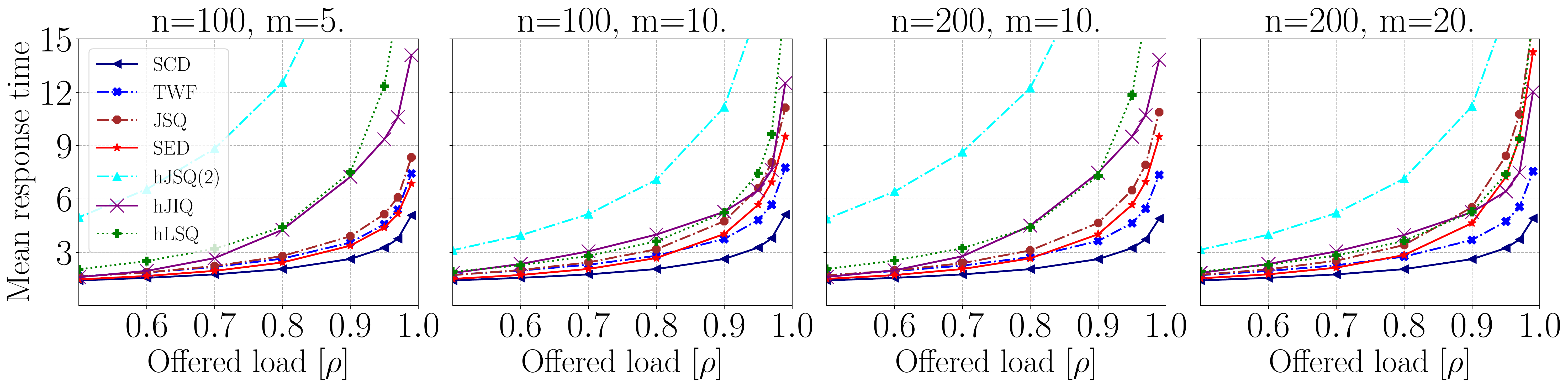}
     \label{fig:evaluation:high:loadsweep}
    \end{subfigure}
    \begin{subfigure}{0.84\textwidth}
      \centering
      \caption{Response time delay tails.}
    \includegraphics[width=\linewidth]{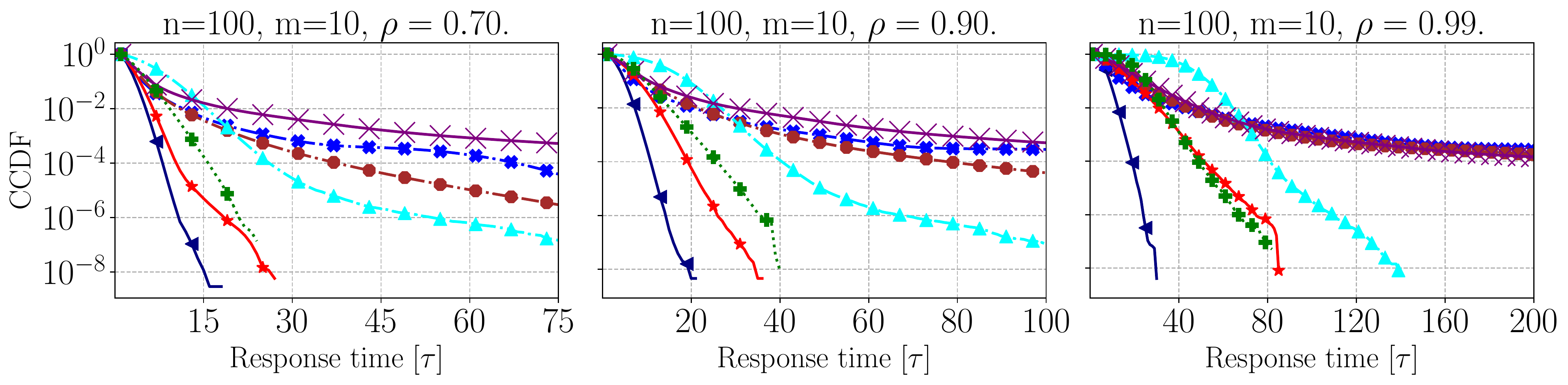}  
     \label{fig:evaluation:high:delaytail}
    \end{subfigure}
    \caption{ The service rate of each server is randomly drawn from the real interval [1,100]. }
    \label{fig:evaluation:high}
\end{figure}

\subsection{Response time}\label{sec:eval:Response time}

We ran the evaluation over four different systems and two different server heterogeneity configurations to represent: (1) moderate heterogeneity that may appear by having different generations of CPUs and hardware configurations of servers and, (2) higher heterogeneity settings that may appear in the presence of accelerators (e.g., FPGA or ASIC).

Figure \ref{fig:evaluation:moderate} shows the evaluation results for case (1) where $\mu_\serv \sim U[1,10]$. Namely, the service rate of each server in each system is randomly drawn from the real interval [1,10]. Figure \ref{fig:evaluation:moderate:loadsweep} shows the average response time of requests as a function of the offered load at the system. It is evident that $SCD$ consistently achieves the best results across all systems and offered loads. Figure \ref{fig:evaluation:moderate:delaytail} shows the response time tail distribution in a system with 100 servers and 10 dispatchers. Again,  $SCD$ achieves the best results with no clear second best. 
For example, at the offered load of $\rho=0.99$ and considering the $10^{-4}$ percentile, which is often of interest, $SCD$ improves over the second-best algorithm ($hLSQ$ in this specific case) by over 2.1$\times$. $TWF$, which is the second-best at the average response time, degrades here and is outperformed by $SED$ and $hLSQ$ since they account for server heterogeneity where $TWF$ does not.

Figure \ref{fig:evaluation:high} shows the evaluation results for case (2) where $\mu_\serv \sim U[1,100]$.
For the average response time metric, Figure \ref{fig:evaluation:high:loadsweep} shows similar trends to case (1), where $SCD$ consistently offers the best results.
Likewise, for the response time tail distribution, Figure \ref{fig:evaluation:high:delaytail} depicts that $SCD$ achieves the best results by an even larger margin than in case (1). For example, at the load of $\rho=0.99$ and considering $10^{-4}$ percentile, $SCD$ improves over the second-best algorithm by over 2.3$\times$.
Note that with this higher heterogeneity, the delay tail distribution of $TWF$ and $JSQ$ is significantly degraded (by more than an order of magnitude even for a load of $\rho=0.7$).

\subsection{Execution run-time}\label{sec:eval:Execution running times}

We next test $SCD$'s execution running times. That is, given the system state and arrivals, how much time does it take for a dispatcher to calculate the dispatching probabilities for that round? 
To answer that question, we implemented all dispatching techniques, and in particular $SCD$ using algorithms \ref{alg:n2} and \ref{alg:opt} as well as $JSQ$ and $SED$ in C++ and optimized them for run-time purposes. 
All running times were measured using a single core setup on a machine with an Intel Core i7-7700 CPU @3.60GHz and 16GB DDR3 2133MHz RAM.

For each algorithm in each round, we measure the time it takes each dispatcher to calculate its requests assignment to the servers. While the asymptotic complexity of the algorithms is fixed, the complexity of each particular instance may require a different computation, depending on the number of arrived requests and server queue-lengths. Therefore, we report the cumulative distribution function (CDF) of those times.

Figure \ref{fig:evaluation:speed} shows the results for the setting described (i.e., $\mu_\serv \sim U[1,10]$).
The running time of $SCD$ via Algorithm \ref{alg:opt} scales similarly to $JSQ$ and $SED$ as expected. I.e., all three have the complexity of $O(n\log n)$. $SCD$ via Algorithm \ref{alg:n2}, on the other hand, is slower. These running time  measurements prove to be consistent. In particular, we obtained similar results for other systems and server heterogeneity levels.
Additional results appear in Appendix \ref{app:Execution run-time}.
We conclude that from the algorithm running time  perspective, $SCD$, when implemented via algorithm \ref{alg:opt}, incurs an acceptable computational overhead, similar to that of $JSQ$ and $SED$, which are in use in \mbox{today's high-performance load balancers \cite{haproxy_gen,ngynx_wlc}.}  

\begin{figure}[t]
    \centering
    \includegraphics[width=0.92\linewidth]{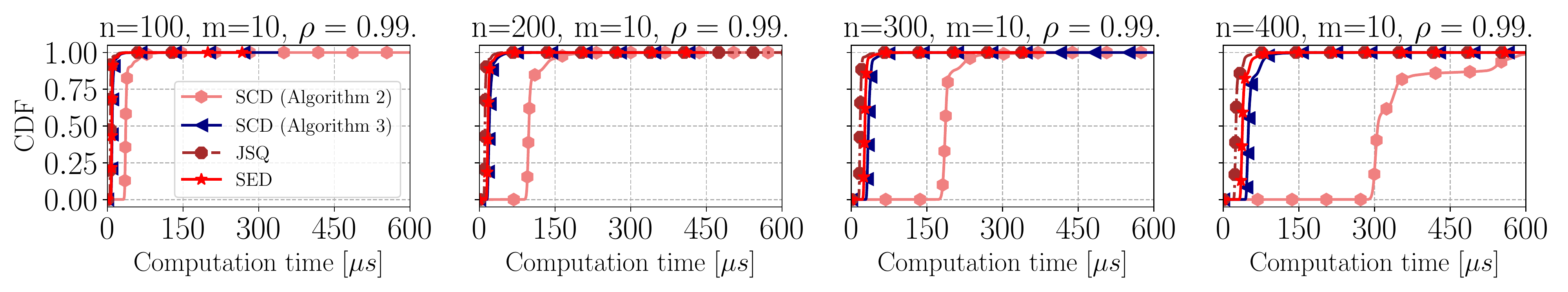} 
    \caption{ Evaluation of execution running times over systems with an increasing number of servers and $\mu_\serv \sim U[1,10]$. }
    \label{fig:evaluation:speed}
\end{figure}

\section{Discussion}\label{sec:disc}
Large scale computing systems are ubiquitous today more than ever.
Often, these systems have many distributed aspects which substantially affect their performance.
The need to address the distributed nature of such systems, both algorithmically and fundamentally, is imperative.
In this work, we presented $SCD$, a load balancing algorithm that addresses modern computer clusters' distributed and heterogeneous nature in a principled manner.
Extensive simulation results demonstrate  that $SCD$ outperforms the state-of-the-art load balancing algorithms across different systems and metrics.
Regarding computation complexity, we designed $SCD$ to run in optimal $O(n\log n)$ time. Therefore, it is no harder to employ in practical systems than traditional approaches such as $JSQ$ and $SED$.

Our work leaves several open problems. For example:
(1) The amount of work that a job requires may depend on specific features of the server processing it. 
Can information about the nature of jobs and features of servers be used to further 
improve the stochastic coordination among the dispatchers? 
(2) It may be that not all dispatchers are simultaneously connected to all servers. 
This implies that the distributed nature of the system might not be symmetric.
How should we incorporate such (possibly dynamic) connectivity information while maintaining stochastic coordination?

Clearly, distributed load balancing presents many interesting challenges, and a broad set of issues for study that can impact the efficiency of practical systems.


\section{Acknowledgements}

Guy Goren was partly supported by a grant from the Technion Hiroshi Fujiwara cyber security research center and the Israel cyber bureau, as well as by a Jacobs fellowship.
Yoram Moses is the Israel Pollak academic chair at the Technion. His work was supported in part by the Israel Science Foundation under grant 2061/19.

\bibliographystyle{plainurl}
\bibliography{references}

\appendix


\newpage

\section{Pseudocode for Computing the IWL and IBA}
\label{app:alg:iba}

\Cref{alg:iba} computes the \iwl{} in a heterogeneous system.
Obtaining the \iba{} is immediate by using 
\begin{equation*}
    \frac{q_\serv + \receives_\serv}{\rate_\serv} = \max\left\{\frac{q_\serv}{\rate_\serv},\, \iwl \right\}.  \tag{\ref{eq:iba from iwl} revisited}
\end{equation*}
This computation of the \iwl{} and \iba{} is efficient. In particular, its complexity is in $O(n)$ if the servers are presorted by $\frac{q_\serv}{\rate_\serv}$.
Recall that we use `$a$' as a shorthand for the total \mbox{sum of arrivals:\, $a\,\triangleq \sum\limits_{\disp\in\setD}\!\!{a^{(\disp)}}$.}
\begin{algorithm}[ht]
  \DontPrintSemicolon
  \SetKwFunction{FMain}{ComputeIdealWorkLoad}
  \SetKwProg{Pn}{Function}{:}{\KwRet}
  \Pn{\FMain{$\setS$, $\setQ$, $\setDu$, $a$}}{
        $\setS_{\ge\wl} \gets \setS$;\,\, $\widthToFill \gets 0$;\,\, $\filled \gets a$\;
        
        $\curr \gets \arg\min_{\serv\in \setS_{\ge\wl}} \frac{q_\serv}{\rate_\serv}$\; 
        $\wl~\gets q_\curr/\rate_\curr$\;
        
        \While{$\filled > 0$}{
            $\widthToFill \gets \widthToFill +\rate_{\curr}$\;
            $\setS_{\ge\wl} \gets \setS_{\ge\wl} {\setminus} \{ \curr \}$\;
            \If{$\setS_{\ge\wl} = \emptyset$}
            {\Return $\wl + \frac{\filled}{\widthToFill}$}
            $\curr \gets \arg\min_{\serv\in \setS_{\ge\wl}} \frac{q_\serv}{\rate_\serv}$\;
            $\Delta \gets \frac{q_{\curr}}{\rate_{\curr}} - \wl$\;
            \If{$\Delta \cdot \widthToFill \ge \filled $}{
           	        \Return $\wl + \frac{\filled}{\widthToFill}$\;
           	}
            $\filled\gets \filled - \Delta \cdot \widthToFill$\;
           	$\wl \gets \wl + \Delta$\;    
        }
    }
    \caption{Computing the ideal workload.}
    \label{alg:iba}
\end{algorithm}


\section{Pseudocode for Computing the Probabilities with Optimal Complexity}
\label{app:alg:opt}

\Cref{alg:opt} computes the optimal probabilities in $O(n)$ time complexity given the ordering of the servers.
The main differences between the $O(n^2)$ \Cref{alg:n2} and the $O(n)$ \Cref{alg:opt} occur in:  
\begin{itemize}
    \item Calculating $\Lambda_0$ in line~7 of~\Cref{alg:n2} is substituted by lines~9-11 of~\Cref{alg:opt},
    
    \item The \textbf{for} loop in lines~8-11 of \Cref{alg:n2} tests whether the computed probabilities are all non-negative. This is replaced by the single $O(1)$ complexity test in lines~12-13 of \Cref{alg:opt},
    
    \item The summation in line~12 of \Cref{alg:n2} is replaced by the computation in lines~14-16 of \Cref{alg:opt}, which rely on \Cref{lem:2}.
\end{itemize}

\begin{algorithm}[ht]
  \DontPrintSemicolon
  \SetKwFunction{FMain}{ComputeProbabilities}
  \SetKwProg{Pn}{Function}{:}{\KwRet}
  \Pn{\FMain{$\setS$, $\setQ$, $\setDu$, $a$, \wl}}{
        $\setO \gets \emptyset$;\,\,$\val^* \gets\infty$\;
        $\Lambda_{0,n}\gets-2(a-1)$;\,\,$\Lambda_{0,d}\gets0$\;
        $p_\serv\gets0\,\forall\serv\in\setS$\;
        $v_1\gets0$;\,\,$v_2\gets0$\;
        \While{$\setS{\setminus}\setO \neq \emptyset$}
        {
            $\curr \gets \arg\min_{\serv\in \setS\setminus\setO} \frac{2q_\serv+1}{\rate_\serv}$\;
            $\setO \gets \setO \cup \curr$\;
            $\Lambda_{0,n} \gets \Lambda_{0,n} + 2(\rate_\curr \wl - q_\curr) - 1$\;
            $\Lambda_{0,d} \gets \Lambda_{0,d} + \rate_\curr$\;
            $\Lambda_0 \gets \frac{\Lambda_{0,n}}{\Lambda_{0,d}}$
            \tcp*[r]{according to \cref{eq:1 equality}}
            \If{$2\wl-\frac{2q_\curr+1}{\rate_\curr} < \Lambda_0$}
                    {Continue \tcp*[r]{the solution is infeasible; go to line~6}}
            $v_1 \gets v_1 + \frac{\rate_\curr}{4(a-1)}$\;  
            $v_2 \gets v_2 + \frac{(2(q_\curr - \rate_\curr\wl) + 1)^2 }{4\rate_\curr(a-1)}$\; 
            $\val \gets v_1 \Lambda_0^2 - v_2$ \tcp*[r]{according to \cref{eq:standard form}}
            \If{$\val < \val^*$} 
            {    
                $\val^* \gets \val$\;
                $\Lambda_0^*\gets \Lambda_0$\;
            }              
        }
        $P^* \gets \left\{ \max \{ 0, \,\,\frac{- 2(q_\serv - \rate_\serv\wl) - 1 - \rate_\serv \Lambda_0^*}{2(a-1)} \}  \right\}_{\serv\in\setS}$ \tcp*[r]{according to \cref{eq:positive p_s}}
        \Return $P^*$\;
    }
    \caption{Find probabilities in optimal time --- $O(n)$ given the servers' ordering.}
    \label{alg:opt}
\end{algorithm}


\section{Full Proofs}\label{app:proofs}

In this appendix, we give the complete proof of the two key Lemmas we employ in \Cref{sec:alg comp} to reduce our dispatching algorithm's computational complexity.

\subsection{Proof of Lemma \ref{lem:ordering}}

For clarity we restate the Lemma.

\orderingLemma*

\begin{proof}
        Let $\servT$ and $\servU$ satisfy the assumption that $\frac{2q_{\servT} + 1}{\rate_{\servT}} \ge \frac{2q_{\servU} + 1}{\rate_{\servU}}$, and let $P^*=\{p^*_1,\ldots, p^*_n\}$ be the optimal solution for \cref{eq:standard form}.
    Assume by contradiction that $p^*_{\servT}>0$ but $p^*_{\servU}=0$.
    We show that there exists a feasible solution $P$ that obtains a lower value of the objective function, thus contradicting the optimality of $P^*$.
    Specifically, we show that for the positive constant
    $$\const = \min \{\frac{\rate_{\servU} (2q_{\servT} + 1) - \rate_{\servT} (2q_{\servU} + 1) + 2\rate_{\servT} \rate_{\servU} p^*_{\servT}} {(a-1)(\rate_{\servT} + \rate_{\servU}) }, p^*_{\servT}\},$$ any $0< \epsilon < z$ and a different solution $P=\{p_1, \ldots, p_n\}$ with $p_{\servU}=\epsilon, p_{\servT}=p^*_{\servT}-\epsilon$, and $p_\serv=p^*_\serv$ for all other servers, it holds that $P$ is a better feasible solution than $P^*$.
    
    First, we show that $P$ is feasible.
    Since $P^*$ is feasible, it holds that $\sum\limits_{\serv\in\setS}p^*_\serv = 1$, and $0\le p^*_s \le 1$ for all $\serv\in\setS$.
    Accordingly, for $P$ it similarly holds that
    $$\sum\limits_{\serv\in\setS}p_\serv = \sum\limits_{\serv\in\setS\setminus\{\servT,\servU\}}p^*_\serv + p_{\servT} + p_{\servU} = \sum\limits_{\serv\in\setS}p^*_\serv -\epsilon + \epsilon = 1.$$
    The condition $0\le p_s \le 1$ trivially follows from $P^*$'s feasibility and the fact that $0< \epsilon < p^*_{\servT} \le 1$.
    
    Now we show that the new solution $P$ obtains in a lower objective function's value (i.e., by \cref{eq:standard form}) than the optimal solution $P^*\!$, leading to a contradiction.
    
    Denote \mbox{$\diff \triangleq f(P^*) - f(P)$}. Now, we show that $\diff>0$.
    By \cref{eq:standard form} we have
    
\begin{align*}
    \diff =&~
    \left((a-1)\sum_{\serv\in \setS} \frac{1}{\rate_\serv} \cdot (p^*_\serv)^2 +  \sum_{\serv\in \setS} \frac{2(q_\serv - \rate_\serv\wl) + 1}{\rate_\serv} \cdot p^*_\serv \right)
    \\&- \left( (a-1)\sum_{\serv\in \setS} \frac{1}{\rate_\serv} \cdot p_\serv^2 +  \sum_{\serv\in \setS} \frac{2(q_\serv - \rate_\serv\wl) + 1}{\rate_\serv} \cdot p_\serv \right)
    \\
    =&~
    \sum_{\serv\in \setS} \left( \frac{a-1}{\rate_\serv} (p^*_\serv)^2 +  \frac{2(q_\serv - \rate_\serv\wl) + 1}{\rate_\serv} p^*_\serv \right)
    - \sum_{\serv\in \setS} \left( \frac{a-1}{\rate_\serv}  p_\serv^2 + \frac{2(q_\serv - \rate_\serv\wl) + 1}{\rate_\serv} p_\serv \right).
\end{align*}
    Next, we split each summation term to a sum over $\setS\setminus\{\servT,\servU\}$ and a sum over $\{\servT,\servU\}$
\begin{align*}
    \diff =&~
    \sum_{\serv\in \setS\setminus\{\servT,\servU\}} \left( \frac{a-1}{\rate_\serv} (p^*_\serv)^2 +  \frac{2(q_\serv - \rate_\serv\wl) + 1}{\rate_\serv} p^*_\serv \right)
    +
    \sum_{\serv\in \{\servT,\servU\}} \left( \frac{a-1}{\rate_\serv} (p^*_\serv)^2 +  \frac{2(q_\serv - \rate_\serv\wl) + 1}{\rate_\serv} p^*_\serv \right)
    \\
    &- \sum_{\serv\in \setS\setminus\{\servT,\servU\}} \left( \frac{a-1}{\rate_\serv}  p_\serv^2 + \frac{2(q_\serv - \rate_\serv\wl) + 1}{\rate_\serv} p_\serv \right)
    -
    \sum_{\serv\in \{\servT,\servU\}} \left( \frac{a-1}{\rate_\serv}  p_\serv^2 + \frac{2(q_\serv - \rate_\serv\wl) + 1}{\rate_\serv} p_\serv \right),
\end{align*}
    and since $p_\serv = p_\serv^*$ for every server in $\setS\setminus\{\servT,\servU\}$, the sums over those sets cancel out.
    Thus, we obtain
\begin{align*}
    \diff =&~ 
    \sum_{\serv\in \{\servT,\servU\}} \left( \frac{a-1}{\rate_\serv} (p^*_\serv)^2 +  \frac{2(q_\serv - \rate_\serv\wl) + 1}{\rate_\serv} p^*_\serv \right)
    -
    \sum_{\serv\in \{\servT,\servU\}} \left( \frac{a-1}{\rate_\serv}  p_\serv^2 + \frac{2(q_\serv - \rate_\serv\wl) + 1}{\rate_\serv} p_\serv \right)
    \\
    =&~
     \sum_{\serv\in \{\servT,\servU\}} \left( \frac{a-1}{\rate_\serv} \left((p^*_\serv)^2 - p_\serv^2 \right) +  \frac{2(q_\serv - \rate_\serv\wl) + 1}{\rate_\serv} (p^*_\serv - p_\serv) \right)
    \\
    =&~ \left( \frac{a-1}{\rate_{\servT}} \left((p^*_{\servT})^2 - p_{\servT}^2 \right) +  \frac{2(q_{\servT} - \rate_{\servT}\wl) + 1}{\rate_{\servT}} (p^*_{\servT} - p_{\servT}) \right)
    \\&+
    \left( \frac{a-1}{\rate_{\servU}} \left((p^*_{\servU})^2 - p_{\servU}^2 \right) +  \frac{2(q_{\servU} - \rate_{\servU}\wl) + 1}{\rate_{\servU}} (p^*_{\servU} - p_{\servU}) \right).
\end{align*}
    We replace $(p^*_{\servT} - p_{\servT})$ and $(p_{\servU} - p^*_{\servU})$ by $\epsilon$. 
    Similarly, we replace $\left( (p^*_{\servT})^2 - p_{\servT}^2 \right)$ by $(2 p^*_{\servT}\epsilon - \epsilon^2)$, and $\left((p^*_{\servU})^2 - p_{\servU}^2 \right)$ by $\left(-\epsilon^2\right)$ and obtain
\begin{align*}
    \diff =&~ 
    \left( \frac{a-1}{\rate_{\servT}} \left(2 p^*_{\servT}\epsilon - \epsilon^2 \right) +  \frac{2(q_{\servT} - \rate_{\servT}\wl) + 1}{\rate_{\servT}} \cdot \epsilon \right)
    +
    \left( \frac{a-1}{\rate_{\servU}} \left( - \epsilon^2 \right) +  \frac{2(q_{\servU} - \rate_{\servU}\wl) + 1}{\rate_{\servU}} (- \epsilon) \right)
    \\
    =&~ \epsilon \left( \frac{2(q_{\servT} - \rate_{\servT}\wl) + 1}{\rate_{\servT}} - \frac{2(q_{\servU} - \rate_{\servU}\wl) + 1}{\rate_{\servU}}  +2p^*_{\servT}\right)
    -
    \epsilon^2 \left( \frac{a-1}{\rate_{\servT}} + \frac{a-1}{\rate_{\servU}} \right)
    \\
    =&~ \epsilon \underbrace{ \left( \frac{2q_{\servT} + 1}{\rate_{\servT}} - \frac{2q_{\servU} + 1}{\rate_{\servU}}  +2p^*_{\servT} \right) }_{x}
    -
    \epsilon^2 \underbrace{ \left( \frac{a-1}{\rate_{\servT}} + \frac{a-1}{\rate_{\servU}} \right) }_{y}.
\end{align*}
    Since $\frac{2q_{\servT} + 1}{\rate_{\servT}} \ge \frac{2q_{\servU} + 1}{\rate_{\servU}}$ and $p^*_{\servT}>0$, it must hold that $x>0$.
    It must also be true that $y>0$ since all $\rate$-s are positive and we consider only $a>1$.
    We therefore obtain that $0 < \frac{x}{y} = \frac{\rate_{\servU} (2q_{\servT} + 1) - \rate_{\servT} (2q_{\servU} + 1) + 2\rate_{\servT} \rate_{\servU} p^*_{\servT}} {(a-1)(\rate_{\servT} + \rate_{\servU}) }$. Hence, there exists $0 < \epsilon < \min \{ \frac{x}{y}, p^*_{\servT} \}$.
    Moreover, for such an $\epsilon$ it holds that 
    $$
    \diff = \epsilon x - \epsilon^2 y = \epsilon (x - \epsilon y) >  \epsilon (x - \frac{x}{y} y) = 0.
    $$
    
    Therefore, solutions exist that are both feasible and result in a lower objective function's value than that of the optimal solution --- a contradiction. This concludes the proof.
\end{proof}


\subsection{Proof of Lemma \ref{lem:2}}

For clarity we restate the Lemma.

\lemTwo*

\begin{proof}
    By combining \cref{eq:positive p_s}, which expresses the positive optimal probabilities as a function of $\Lambda_0$, into the objective function of \cref{eq:standard form} we obtain
    \begin{equation*}
    \begin{aligned}
        f(P(\Lambda_0)) =&~  (a-1)\sum_{\serv \in \setSp} \frac{1}{\rate_\serv}  \left( \frac{- 2(q_\serv - \rate_\serv\wl) - 1 - \rate_\serv \Lambda_0}{2(a-1)} \right)^2
         \\ & + 
        \sum_{\serv \in \setSp} \frac{2(q_\serv - \rate_\serv\wl) + 1}{\rate_\serv} \cdot \left( \frac{- 2(q_\serv - \rate_\serv\wl) - 1 - \rate_\serv \Lambda_0}{2(a-1)} \right)\\
        =&  \sum_{\serv \in \setSp} \frac{1}{4\rate_\serv(a-1)} \cdot
            \left( (2(q_\serv - \rate_\serv\wl) + 1)^2 + (\rate_\serv \Lambda_0)^2 + 2(2(q_\serv - \rate_\serv\wl) + 1)\rate_\serv \Lambda_0 \right) \\
        & + \sum_{\serv \in \setSp} \frac{ -(2(q_\serv - \rate_\serv\wl) + 1)^2 - (2(q_\serv - \rate_\serv\wl) + 1) \rate_\serv \Lambda_0 }{2\rate_\serv(a-1)}\\
        =&  \sum_{\serv \in \setSp} \frac{(2(q_\serv - \rate_\serv\wl) + 1)^2}{4\rate_\serv(a-1)}
            + \Lambda_0 ^2\sum_{\serv \in \setSp} \frac{\rate_\serv}{4(a-1)} 
            + \Lambda_0 \sum_{\serv \in \setSp} \frac{2(2(q_\serv - \rate_\serv\wl) + 1)}{4(a-1)} \\
        & - \sum_{\serv \in \setSp} \frac{(2(q_\serv - \rate_\serv\wl) + 1)^2 }{2\rate_\serv(a-1)}
        - \Lambda_0 \sum_{\serv \in \setSp} \frac{(2(q_\serv - \rate_\serv\wl) + 1) }{2(a-1)} \\
        =&~ \Lambda_0^2\underbrace{\sum_{\serv \in \setSp} \frac{\rate_\serv}{4(a-1)}}_{v_1}
         - \underbrace{\sum_{\serv \in \setSp} \frac{(2(q_\serv - \rate_\serv\wl) + 1)^2 }{4\rate_\serv(a-1)}}_{v_2}
        .
    \end{aligned}
    \end{equation*}
This concludes the proof.
\end{proof}


\section{Strong stability}\label{app:stabiltiy}

In line with standard practice~\cite{vargaftik2020lsq,DISC2020,zhou2020asymptotically}, we make assumptions on the arrival and departure processes that make the system dynamics amenable to formal analysis. In particular, we assume, for all dispatchers $\disp\in\setD$,
\begin{equation}\label{eq:arrivals}
    \brac{a^{(\disp)}(t)}_{t=0}^{\infty} \text{ is an $i.i.d.$ process}, \quad \E[a^{(\disp)}(0)] = \lambda_\disp, \quad \E [(a^{(\disp)}(0))^2] = \sigma_\disp.
\end{equation}
Likewise, for all servers $\serv\in\setS$
\begin{equation}\label{eq:departures}
    \{c_\serv(t)\}_{t=0}^{\infty} \text{ is an $i.i.d.$ process}, \quad \E[c_\serv(0)] = \mu_\serv, \quad \E [(c_\serv(0))^2] = \varphi_\serv.
\end{equation}
Namely, for both the arrival and departure processes, we make the standard assumption that they are $i.i.d.$ and have a finite variance. We remark that the assumption that the arrival processes at the dispatchers are independent is made for ease of exposition and can be dropped by one skilled at the art at the cost of a more involved presentation (e.g., \cite{vargaftik2020lsq}). 

Intuitively, our goal is to prove that if, on average, the total arrivals at the system are below the total processing capacity of all servers, then the expected queue-lengths at the servers are bounded by a constant. We next present the required terms to formalize this intuition. 

\TT{Admissibility.} We assume the total expected arrival rate to the system is admissible. Formally, it means that we assume that there exists an $\epsilon > 0$ such that $\sum_{\serv} \mu_s - \sum_{\disp} \lambda_d = \epsilon$.

We prove that for any multi-dispatcher heterogeneous system with admissible arrivals our dispatching policy is strongly stable. 
Our proof follows similar lines to the strong stability proof in \cite{DISC2020}. 
The key difference is that we account for server heterogeneity. This makes the proof somewhat more involved.
We next formally define the well-established \emph{strong stability} criterion of interest.  
Strong stability is a strong form of stability for discrete-time queuing systems. Similarly to~\cite{vargaftik2020lsq,DISC2020,zhou2020asymptotically}, since we assume that the arrival and departure processes has a finite variance, this criterion also implies throughput optimality and other strong theoretical guarantees that may be of interest (see \cite{neely2010stability,neely2007optimal,georgiadis2006resource} for details).
\begin{definition}[Strong stability] A load balancing system is said to be strongly stable if for any admissible arrival rate it holds that 
$$\limsup_{T \to \infty} \frac{1}{T} \sum_{t=0}^{T-1} \sum_{\serv\in\setS} \E \Big[q_\serv(t)\Big] < \infty~.$$
\end{definition}
Now, we are ready to formalize the queue dynamics. Let $\receives_\serv(t) = \sum_{\disp\in\setD} \receives_\serv^{(\disp)}(t)$ be the total number of arrivals at server $\serv$ and round $t$.  Then, the recursion describing the queue dynamics of server $\serv$ over rounds is given by
\begin{equation}\label{eq:queue_dynamics}
q_\serv(t+1) = \max \{0, q_\serv(t) + \receives_\serv(t) - c_\serv(t)\}~. 
\end{equation}
Squaring both sides of \cref{eq:queue_dynamics}, rearranging, dividing by the server's processing capacity and omitting terms yields, 
\begin{equation}\label{eq:basic_nneq}
    \frac{1}{\mu_\serv}\bp{q_\serv(t+1)}^2 - \frac{1}{\mu_\serv}\bp{q_\serv(t)}^2 \le \frac{1}{\mu_\serv}\bp{\receives_\serv(t)}^2 + \frac{1}{\mu_\serv}\bp{c_\serv(t)}^2 - \frac{2q_\serv(t)}{\mu_\serv}\bp{c_\serv(t)-\receives_\serv(t)}~.    
\end{equation}
Summing \cref{eq:basic_nneq} over the servers yields
\begin{equation}\label{eq:sum_over_servers}
\begin{aligned}
  \sum_{\serv\in\setS} \frac{1}{\mu_\serv}\bp{q_\serv(t+1)}^2 & - \sum_{\serv\in\setS} \frac{1}{\mu_\serv}\bp{q_\serv(t)}^2 \le \\ &\sum_{\serv\in\setS} \frac{1}{\mu_\serv}\bp{\receives_\serv(t)}^2 + \sum_{\serv\in\setS} \frac{1}{\mu_\serv}\bp{c_\serv(t)}^2 - 2\sum_{\serv\in\setS} \frac{q_\serv(t)}{\mu_\serv}\bp{c_\serv(t)-\receives_\serv(t)}~.
\end{aligned}
\end{equation}
Denote $\mu_{tot}=\sum_{\serv\in\setS}\mu_\serv$. Now, we define the following useful quantity for each server $\serv$: $w_{\serv} = \frac{\mu_{\serv}}{\mu_{tot}}$.  
Next, for each $(\serv,\disp,k)$, let $I_\serv^{\disp,k}(t)$ be an indicator function that takes the value of 1 with probability $w_\serv$ and 0 otherwise such that 
$$\sum_{\serv\in\setS} I_\serv^{\disp,k}(t) = 1 \quad \forall \disp \in \setD, k \in [1,\ldots,a^{(\disp)}(t)]~.$$
We rewrite \cref{eq:sum_over_servers} and add and subtract the term 
$2 \sum_{\serv\in\setS}  \sum_{\disp\in\setD} \sum_{k=1}^{a^{(\disp)}(t)} I_\serv^{\disp,k}(t) q_\serv(t)$
from the right hand side of the equation. This yields, 
\begin{equation}\label{eq:splittable_abc}
\begin{split}
  & \sum_{\serv\in\setS} \frac{1}{\mu_\serv}\bp{q_\serv(t+1)}^2 - \sum_{\serv\in\setS} \frac{1}{\mu_\serv}\bp{q_\serv(t)}^2 \le \underbrace{\sum_{\serv\in\setS} \frac{1}{\mu_\serv}\bp{\receives_\serv(t)}^2 + \sum_{\serv\in\setS} \frac{1}{\mu_\serv}\bp{c_\serv(t)}^2}_{(a)} \cr
  &  - 2\underbrace{\sum_{\serv\in\setS} \frac{q_\serv(t)}{\mu_\serv}\bp{c_\serv(t)-\sum_{\disp\in\setD} \sum_{k=1}^{a^{(\disp)}(t)} I_\serv^{\disp,k}(t)}}_{(b)} \cr
  & + 2\underbrace{\sum_{\serv\in\setS} \frac{q_\serv(t)}{\mu_\serv}\bp{\sum_{\disp\in\setD} \receives_\serv^{(\disp)}(t)-\sum_{\disp\in\setD} \sum_{k=1}^{a^{(\disp)}(t)} I_\serv^{\disp,k}(t)}}_{(c)}~,
\end{split}
\end{equation}
where we also used $\receives_\serv(t) = \sum_{\disp\in\setD} \receives_\serv^{(\disp)}(t)$ in (c). 
Our goal now is to take the expectation of \cref{eq:splittable_abc}.
We start with analyzing Term (a) in \cref{eq:splittable_abc}. 
Denote $\mu_{\min} = \min\limits_{\serv\in\setS} \rate_\serv$. 
Taking expectation and using \cref{eq:arrivals} and \cref{eq:departures} we obtain
\begin{equation}\label{eq:splittable_a}
\begin{split}
&\mathbb{E} \bigg[\sum_{\serv\in\setS} \frac{1}{\mu_\serv}\bp{\receives_\serv(t)}^2 + \sum_{\serv\in\setS} \frac{1}{\mu_\serv}\bp{c_\serv(t)}^2\bigg]  \le 
\frac{1}{\mu_{\min}} \mathbb{E} \bigg[\Big(\sum_{\disp\in\setD} a^{(\disp)}(t) \Big)^2 \bigg] + \sum_{\serv\in\setS} \frac{\varphi_\serv}{\mu_\serv} \\ 
&= 
\frac{1}{\mu_{\min}} \bigg( \sum_{\disp\in\setD} \sigma_\disp + \sum_{\disp\in\setD}\sum_{\disp'\in\setD, \disp'\neq\disp} \lambda_\disp \lambda_{\disp'} \bigg) + \sum_{\serv\in\setS} \frac{\varphi_\serv}{\mu_\serv}
\triangleq C~.
\end{split}
\end{equation}
We now turn to analyze Term (b) in \cref{eq:splittable_abc}. Using the law of total expectation we obtain
\begin{equation}\label{eq:splittable_b}
\begin{split}
&\mathbb{E}\bigg[ \sum_{\serv\in\setS} \frac{q_\serv(t)}{\mu_\serv}\bp{c_\serv(t)-\sum_{\disp\in\setD} \sum_{k=1}^{a^{(\disp)}(t)} I_\serv^{\disp,k}(t)}\bigg] = \mathbb{E}\Bigg[ \mathbb{E}\bigg[ \sum_{\serv\in\setS} \frac{q_\serv(t)}{\mu_\serv}\bp{c_\serv(t)-\sum_{\disp\in\setD} \sum_{k=1}^{a^{(\disp)}(t)} I_\serv^{\disp,k}(t)}\bigg] \Bigg| \set{q_\serv(t)}_{\serv\in\setS} \Bigg] \\
& = \sum_{\serv\in\setS} \mathbb{E}\Bigg[ \frac{q_\serv(t)}{\mu_s} \bigg(\mu_\serv - w_\serv \sum_{\disp\in\setD} \lambda_\disp \bigg) \Bigg] = \sum_{\serv\in\setS} \mathbb{E}\Bigg[ \frac{q_\serv(t)}{\mu_{tot}} \bigg(\mu_{tot} - \sum_{\disp\in\setD} \lambda_\disp \bigg) \Bigg] = \frac{\epsilon}{\mu_{tot}} \sum_{\serv\in\setS} \mathbb{E}\big[ q_\serv(t) \big]~,
\end{split}
\end{equation}
where we used \cref{eq:arrivals}, \cref{eq:departures}, the definition of $I_\serv^{\disp,k}(t)$ and the admissibility of the system. We also used the fact that $a^{(\disp)}(t)$ and $I_\serv^{\disp,k}(t)$ are independent which allowed us to employ Wald's identity. We turn to analyze term (c) in \cref{eq:splittable_abc}. 
Observer that
\begin{equation}\label{eq:switch_sum_order}
\sum_{\serv\in\setS} \frac{q_\serv(t)}{\mu_\serv}\bp{\sum_{\disp\in\setD} \receives_\serv^{(\disp)}(t)-\sum_{\disp\in\setD} \sum_{k=1}^{a^{(\disp)}(t)} I_\serv^{\disp,k}(t)} = \sum_{\disp\in\setD} \Bigg( \sum_{\serv\in\setS} \frac{q_\serv(t)}{\mu_\serv}\bp{\receives_\serv^{(\disp)}(t)-\sum_{k=1}^{a^{(\disp)}(t)} I_\serv^{\disp,k}(t)} \Bigg)~.
\end{equation}
This means that we can consider each dispatcher separately. In particular, for each dispatcher, we will apply the following Lemma.
\begin{lemma}\label{lemma:prepate_majorization}
Consider dispatcher $\disp\in\setD$. For any $a^{(d)}(t) > 0$, let $\set{p_\serv^{(\disp)}(t)}_{\serv\in\setS}$ be the optimal probabilities computed by dispatcher $\disp$ for round $t$ (i.e., probabilities that respect \cref{eq:standard form} for the computed $\wl$). Then, for any $p_\serv^{(\disp)}(t), p_{\serv'}^{(\disp)}(t) > 0$, it holds that 
\begin{equation}\label{eq:todo1}
\begin{split}
\frac{p_\serv^{(\disp)}(t)}{\mu_\serv} \le \frac{p_{\serv'}^{(\disp)}(t)}{\mu_{\serv'}} \rightarrow  \frac{q_\serv(t)}{\mu_\serv} + \frac{a^{(\disp)}(t)}{\mu_{min}} \ge \frac{q_\serv(t) + a^{(\disp)}(t)}{\mu_\serv} \ge \frac{q_{\serv'}(t)}{\mu_{\serv'}}~.
\end{split}
\end{equation}
\end{lemma}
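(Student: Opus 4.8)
The plan is to split the claimed two-sided chain. The left inequality, $\frac{q_\serv(t)}{\rate_\serv}+\frac{a^{(\disp)}(t)}{\rate_{\min}}\ge\frac{q_\serv(t)+a^{(\disp)}(t)}{\rate_\serv}$, is immediate: after cancelling $\frac{q_\serv(t)}{\rate_\serv}$ it reads $\frac{a^{(\disp)}(t)}{\rate_{\min}}\ge\frac{a^{(\disp)}(t)}{\rate_\serv}$, which holds because $\rate_{\min}\le\rate_\serv$ and $a^{(\disp)}(t)\ge0$. So all the substance sits in the right inequality $\frac{q_\serv(t)+a^{(\disp)}(t)}{\rate_\serv}\ge\frac{q_{\serv'}(t)}{\rate_{\serv'}}$, and the key move will be to translate the hypothesis $\frac{p_\serv^{(\disp)}(t)}{\rate_\serv}\le\frac{p_{\serv'}^{(\disp)}(t)}{\rate_{\serv'}}$ into a statement that involves only queue lengths and rates.

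For that translation I would argue as follows. Since $p_\serv^{(\disp)}(t)>0$ and $p_{\serv'}^{(\disp)}(t)>0$, both $\serv$ and $\serv'$ lie in the probable set $\setSp$ of dispatcher $\disp$'s round-$t$ program, so both probabilities are given by the closed form \cref{eq:positive p_s}. Writing $a:=a_{\text{est},\disp}$ and treating first the genuinely distributed regime $a>1$, dividing \cref{eq:positive p_s} by $\rate_\serv$ gives
\[
\frac{p_\serv^{(\disp)}(t)}{\rate_\serv}=\frac{1}{2(a-1)}\Big(2\wl-\Lambda_0-\frac{2q_\serv(t)+1}{\rate_\serv}\Big),
\]
with $\wl$ and $\Lambda_0$ the same scalars for $\serv'$. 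Since $2(a-1)>0$, the hypothesis is then equivalent to $\frac{2q_\serv(t)+1}{\rate_\serv}\ge\frac{2q_{\serv'}(t)+1}{\rate_{\serv'}}$; in other words, the ordering of the $\frac{p}{\rate}$ values is exactly the reverse of the ordering that drives \cref{lem:ordering} and \cref{corollary:n_subsets}. (When $a_{\text{est},\disp}=1$ the round-$t$ program degenerates to \cref{eq:simplyfied optimization problem a=1}, whose optimizers are supported only on servers attaining $\min_\serv\frac{2q_\serv(t)+1}{\rate_\serv}$, so $p_\serv^{(\disp)}(t),p_{\serv'}^{(\disp)}(t)>0$ already forces the same inequality, with equality; the probability hypothesis is not even needed there.)

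It then remains to finish from $\frac{2q_\serv(t)+1}{\rate_\serv}\ge\frac{2q_{\serv'}(t)+1}{\rate_{\serv'}}$. Rearranging and discarding the nonnegative term $\frac{1}{\rate_{\serv'}}$ on the right yields $\frac{q_\serv(t)}{\rate_\serv}\ge\frac{q_{\serv'}(t)}{\rate_{\serv'}}-\frac{1}{2\rate_\serv}$; adding $\frac{a^{(\disp)}(t)}{\rate_\serv}$ to both sides and using that $a^{(\disp)}(t)$ is a positive integer, hence $\frac{a^{(\disp)}(t)}{\rate_\serv}-\frac{1}{2\rate_\serv}=\frac{2a^{(\disp)}(t)-1}{2\rate_\serv}\ge0$, gives $\frac{q_\serv(t)+a^{(\disp)}(t)}{\rate_\serv}\ge\frac{q_{\serv'}(t)}{\rate_{\serv'}}$, which completes the proof.

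I do not expect a deep obstacle here: the lemma is essentially \cref{corollary:n_subsets}'s prefix structure of $\setSp$ re-expressed through the probabilities, combined with the integrality of the arrivals. The only care needed is the bookkeeping — that dividing by $2(a-1)$ preserves the inequality precisely because $a>1$, that discarding $\frac{1}{\rate_{\serv'}}$ is the correct worst-case direction, and that the boundary case $a_{\text{est},\disp}=1$ must be handled separately since \cref{eq:positive p_s} is only valid for $a>1$.
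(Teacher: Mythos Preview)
Your proof is correct, and it takes a genuinely different route from the paper's. The paper argues by contradiction: assuming $\frac{p_\serv}{\mu_\serv}\le\frac{p_{\serv'}}{\mu_{\serv'}}$ yet $\frac{q_\serv+a^{(\disp)}}{\mu_\serv}<\frac{q_{\serv'}}{\mu_{\serv'}}$, it constructs a perturbed feasible solution $\tilde p_\serv=p_\serv+p_{\serv'}$, $\tilde p_{\serv'}=0$, computes the resulting change in the objective of \cref{eq:standard form}, and shows via a case split on whether $\mu_\serv\ge\mu_{\serv'}$ that the change is strictly negative, contradicting optimality. You instead go forward: since both servers lie in $\setSp$, the closed form \cref{eq:positive p_s} applies, and dividing it by $\rate_\serv$ shows that the map $\serv\mapsto\frac{p_\serv}{\rate_\serv}$ is an affine, strictly decreasing function of $\frac{2q_\serv+1}{\rate_\serv}$ (the same quantity that orders $\setSp$ in \cref{corollary:n_subsets}); the probability hypothesis is therefore \emph{equivalent} to $\frac{2q_\serv+1}{\rate_\serv}\ge\frac{2q_{\serv'}+1}{\rate_{\serv'}}$, after which the desired inequality drops out from the integrality $a^{(\disp)}(t)\ge1$.

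Your argument is shorter, avoids the case analysis, and makes transparent why the lemma is really the prefix structure of \cref{corollary:n_subsets} seen through the probabilities. It also keeps the two ``$a$'' values cleanly separated: $a_{\text{est},\disp}$ enters only to guarantee $2(a-1)>0$ so the ordering is preserved, while $a^{(\disp)}(t)$ enters only through its integrality in the final step. The paper's perturbation approach, by contrast, is self-contained in the sense that it does not invoke the KKT formula \cref{eq:positive p_s}, relying solely on optimality of $P$; but that independence comes at the cost of a longer computation.
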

\begin{proof}
    See Appendix \ref{app:prepate_majorization}
\end{proof}
With this result at hand, we produce the following construction. Let $p_{\serv_{(1)}}^{(\disp)}(t), p_{\serv_{(2)}}^{(\disp)}(t), \ldots, p_{\serv_{(n)}}^{(\disp)}(t)$ be the optimal probabilities computed by dispatcher $\disp$ ordered in a decreasing order of their value divided by the corresponding server rate, i.e., $\frac{p_{\serv}^{(\disp)}(t)}{\mu_\serv}$. Then, according to Lemma \ref{lemma:prepate_majorization}, for any two servers $\serv_{(i)}, \serv_{(i+1)}$ with positive probabilities, we have that
\begin{equation}\label{eq:todo2}
\begin{split}
\frac{q_{\serv_{(i+1)}}(t)}{\mu_{\serv_{(i+1)}}}  + i \cdot \frac{a^{(\disp)}(t)}{\mu_{min}}\ge \frac{q_{\serv_{(i)}}(t)}{\mu_{\serv_{(i)}}}  + (i-1) \cdot \frac{a^{(\disp)}(t)}{\mu_{min}}~.
\end{split}
\end{equation}
Now, for dispatcher $\disp\in\setD$ consider
\begin{equation}\label{eq:todo3}
\begin{split}
\sum_{{\serv_{(i)}}\in\setS} & \frac{q_{\serv_{(i)}}(t)}{\mu_{\serv_{(i)}}}\bp{\receives_{\serv_{(i)}}^{(\disp)}(t)- \sum_{k=1}^{a^{(\disp)}(t)} I_{\serv_{(i)}}^{\disp,k}(t)} =\\& \underbrace{\sum_{{\serv_{(i)}}\in\setS} \bp{\frac{q_{\serv_{(i)}}(t)}{\mu_{\serv_{(i)}}} + (i-1) \cdot \frac{a^{(\disp)}(t)}{\mu_{min}}} \bp{\receives_{\serv_{(i)}}^{(\disp)}(t)- \sum_{k=1}^{a^{(\disp)}(t)} I_{\serv_{(i)}}^{\disp,k}(t)}}_{(i)} \\ 
& - \underbrace{\sum_{{\serv_{(i)}}\in\setS} \bp{(i-1) \cdot \frac{a^{(\disp)}(t)}{\mu_{min}}} \bp{\receives_{\serv_{(i)}}^{(\disp)}(t)- \sum_{k=1}^{a^{(\disp)}(t)} I_{\serv_{(i)}}^{\disp,k}(t)}}_{(ii)}.
\end{split}
\end{equation}
For (i), we use Lemma \ref{lemma:majorization} which relies on a \emph{majorization} argument similarly to \cite{vargaftik2020lsq,zhou2020asymptotically}.
\begin{lemma}\label{lemma:majorization}
\begin{equation}\label{eq:todo4}
\begin{split} 
\mathbb{E} \Bigg[ \sum_{{\serv_{(i)}}\in\setS} \bp{\frac{q_{\serv_{(i)}}(t)}{\mu_{\serv_{(i)}}} + (i-1) \cdot \frac{a^{(\disp)}(t)}{\mu_{min}}} \bp{\receives_{\serv_{(i)}}^{(\disp)}(t)- \sum_{k=1}^{a^{(\disp)}(t)} I_{\serv_{(i)}}^{\disp,k}(t)} \Bigg]  \le 0~.
\end{split}
\end{equation}
\end{lemma}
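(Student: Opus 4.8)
The plan is to condition on the queue-length vector $\set{q_\serv(t)}_{\serv\in\setS}$ and on the realised arrival $a^{(\disp)}(t)$, thereby reducing \cref{eq:todo4} to a purely deterministic inequality between two weighted sums, and then to establish that inequality by a rearrangement / Abel-summation (``majorization'') argument. Write $A\triangleq a^{(\disp)}(t)$; if $A=0$ both inner sums vanish and the claim is trivial, so assume $A\ge 1$. Given $\set{q_\serv(t)}_\serv$ and $A$, the probabilities $\set{p_\serv^{(\disp)}(t)}_\serv$ are fixed (each is a deterministic function of the queue lengths and of the estimate $a_{\text{est},\disp}=mA$), the quantities $\phi_i\triangleq \frac{q_{\serv_{(i)}}(t)}{\mu_{\serv_{(i)}}}+(i-1)\frac{A}{\mu_{min}}$ are fixed, while $\mathbb{E}\big[\receives^{(\disp)}_{\serv_{(i)}}(t)\mid\cdot\big]=A\,p_{\serv_{(i)}}^{(\disp)}(t)$ (a multinomial mean) and $\mathbb{E}\big[\sum_{k=1}^{A}I^{\disp,k}_{\serv_{(i)}}(t)\mid\cdot\big]=A\,w_{\serv_{(i)}}$. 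Hence it suffices to prove the deterministic inequality
\begin{equation*}
\sum_{\serv_{(i)}\in\setS}\phi_i\,p_{\serv_{(i)}}^{(\disp)}(t)~\le~\sum_{\serv_{(i)}\in\setS}\phi_i\,w_{\serv_{(i)}}.
\end{equation*}

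Next I would assemble the two structural ingredients. First, the partial-sum (majorization) property: the listing $\serv_{(1)},\ldots,\serv_{(n)}$ is in non-increasing order of $r_i\triangleq p_{\serv_{(i)}}^{(\disp)}(t)/\mu_{\serv_{(i)}}$, whereas $w_{\serv_{(i)}}/\mu_{\serv_{(i)}}=1/\mu_{tot}$ is constant; since $\sum_i p_{\serv_{(i)}}^{(\disp)}(t)=\sum_i w_{\serv_{(i)}}=1$, the prefix differences $S_k\triangleq\sum_{i=1}^{k}\big(p_{\serv_{(i)}}^{(\disp)}(t)-w_{\serv_{(i)}}\big)=\sum_{i=1}^{k}\mu_{\serv_{(i)}}\big(r_i-\tfrac{1}{\mu_{tot}}\big)$ satisfy $S_0=S_n=0$ and $S_k\ge 0$ for every $k$ (the summands are nonnegative while $r_i\ge 1/\mu_{tot}$ and nonpositive afterwards, and they sum to $0$, so no prefix can be negative). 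Second, the weight sequence $(\phi_i)_{i=1}^n$ is non-decreasing: for two consecutive servers both in $\setSp$ this is exactly \cref{eq:todo2} (a consequence of \Cref{lemma:prepate_majorization}); the servers outside $\setSp$ have $r_i=0$ and therefore occupy the last positions, and listing them in non-decreasing order of $q_\serv(t)/\mu_\serv$ makes each of their increments $\phi_{i+1}-\phi_i=\frac{q_{\serv_{(i+1)}}(t)}{\mu_{\serv_{(i+1)}}}-\frac{q_{\serv_{(i)}}(t)}{\mu_{\serv_{(i)}}}+\frac{A}{\mu_{min}}$ nonnegative; and the single transition from the last $\setSp$-server to the first non-$\setSp$-server is handled via \Cref{lem:ordering} (which forces $\frac{2q_\serv+1}{\mu_\serv}$ of the latter to strictly exceed that of the former, so $\frac{q_\serv}{\mu_\serv}$ can drop by at most $\tfrac{1}{2\mu_{min}}\le\tfrac{A}{\mu_{min}}$) together with $A\ge 1$.

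Finally I would invoke Abel summation: with $d_i\triangleq p_{\serv_{(i)}}^{(\disp)}(t)-w_{\serv_{(i)}}$ one gets $\sum_{i}\phi_i d_i=\sum_{i=1}^{n-1}(\phi_i-\phi_{i+1})S_i$, each term of which is a product of the nonpositive factor $\phi_i-\phi_{i+1}$ and the nonnegative factor $S_i$, hence $\le 0$. This establishes the displayed deterministic inequality; multiplying by $A\ge 1$ gives $\mathbb{E}[\,\cdot\mid\set{q_\serv(t)}_\serv,A]\le 0$, and taking the outer expectation over $\set{q_\serv(t)}_\serv$ and $a^{(\disp)}(t)$ yields \cref{eq:todo4}.

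I expect the main obstacle to be the bookkeeping in the monotonicity claim for $(\phi_i)$. The ``staircase'' offset $(i-1)\tfrac{A}{\mu_{min}}$ — inserted in \cref{eq:todo3} precisely for this purpose — converts the a priori unordered quantities $q_{\serv_{(i)}}(t)/\mu_{\serv_{(i)}}$ into a monotone sequence, but verifying monotonicity uniformly over $i$ (across the $\setSp$/non-$\setSp$ boundary, and among the zero-probability servers, where one must pick a convenient tie-break order) is where \Cref{lemma:prepate_majorization}, \Cref{lem:ordering}, and the integrality bound $a^{(\disp)}(t)\ge 1$ all get used. Once $(\phi_i)$ is known to be non-decreasing, the majorization/Abel-summation conclusion is routine.
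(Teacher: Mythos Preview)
Your proposal is correct and follows essentially the same route as the paper (Appendix~\ref{app:majorization}): condition on $\set{q_\serv(t)}$ and $a^{(\disp)}(t)$, compute the conditional means to reduce to the deterministic inequality $\sum_i \phi_i\,p_{\serv_{(i)}}^{(\disp)}(t)\le\sum_i\phi_i\,w_{\serv_{(i)}}$, and establish it by combining the prefix-sum (majorization) property with monotonicity of the weight sequence $\phi_i$---the paper packages these as Lemmas~\ref{lemma:formalise_majorization_1} and~\ref{lemma:formalise_majorization_2} and closes with a split-at-$j$ argument, while your Abel-summation finish is an equivalent implementation. Your explicit treatment of the $\setSp$/non-$\setSp$ boundary (via \Cref{lem:ordering} and a convenient tie-break among the zero-probability servers) is in fact more thorough than the paper, which asserts monotonicity of $\eta_{(i)}$ from \cref{eq:todo2} alone, a relation stated only for pairs with positive probability.
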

\begin{proof}
    See Appendix \ref{app:majorization}. 
\end{proof}

\noindent We continue to analyze (ii). By \cref{eq:arrivals}, it holds that 
\begin{equation}\label{eq:todo5}
\begin{split} 
\mathbb{E} \Bigg[\sum_{{\serv_{(i)}}\in\setS} \bp{(i-1) \cdot \frac{a^{(\disp)}(t)}{\mu_{min}}} & \bp{\receives_{\serv_{(i)}}^{(\disp)}(t)- \sum_{k=1}^{a^{(\disp)}(t)} I_{\serv_{(i)}}^{\disp,k}(t)}\Bigg] \le \\& \mathbb{E} \Bigg[\sum_{{\serv_{(i)}}\in\setS} \bp{(i-1) \cdot \frac{(a^{(\disp)}(t))^2}{\mu_{min}}}\Bigg] = \frac{\sigma_{d}(n^2-n)}{2\mu_{min}}~.
\end{split}
\end{equation}
By taking the expectation of Term (c) and using \cref{eq:switch_sum_order}, Lemma \ref{lemma:majorization} and \cref{eq:todo5}, we obtain 
\begin{equation}\label{eq:sum_over_d_term_c}
\mathbb{E} \Bigg[ \sum_{\serv\in\setS} \frac{q_\serv(t)}{\mu_\serv}\bp{\sum_{\disp\in\setD} \receives_\serv^{(\disp)}(t)-\sum_{\disp\in\setD} \sum_{k=1}^{a^{(\disp)}(t)} I_\serv^{\disp,k}(t)} \Bigg] \le \sum_{\disp\in\setD} \frac{\sigma_{d}(n^2-n)}{2\mu_{min}} \triangleq D~.
\end{equation}

Finally, by applying \cref{eq:splittable_a}, \cref{eq:splittable_b} and \cref{eq:sum_over_d_term_c}, we can take the expectation of the right-hand side of \cref{eq:splittable_abc}. This yields
\begin{equation}\label{eq:todo6}
\begin{split}
  & \mathbb{E} \bigg[\sum_{\serv\in\setS} \frac{1}{\mu_\serv}\bp{q_\serv(t+1)}^2 \bigg] - \mathbb{E} \bigg[\sum_{\serv\in\setS} \frac{1}{\mu_\serv}\bp{q_\serv(t)}^2 \bigg] \le C - \frac{2\epsilon}{\mu_{tot}} \sum_{\serv\in\setS} \mathbb{E}\big[ q_\serv(t) \big] + 2D~.
\end{split}
\end{equation}
Summing \cref{eq:todo6} over rounds $0,\ldots,T{-}1$, multiplying by $\frac{\mu_{tot}}{2\epsilon T}$ and rearranging yields,
\begin{equation}\label{eq:upsplittable_abc_expectation_simple}
\frac{1}{T} \sum_{t=0}^{T-1} \sum_{\serv\in\setS} \E \Big[q_\serv(t)\Big] \le \frac{(C+2D)\mu_{tot}}{2\epsilon} + \frac{\mu_{tot}}{2\epsilon T} \E \sum_{\serv\in\setS} \frac{1}{\mu_\serv}\bp{q_\serv(0)}^2~,
\end{equation}
where we omitted the non-positive term $\E \Big[-\sum_{\serv\in\setS} \frac{1}{\mu_\serv}\bp{q_\serv(T)}^2\Big]$ as a results of the telescopic series at the left hand side of \cref{eq:todo6}. Taking limits of \cref{eq:upsplittable_abc_expectation_simple} and making the standard assumption that the system is initialized with bounded queue-lengths, \ie \mbox{$\E \Big[\sum_{\serv\in\setS} \bp{q_\serv(0)}^2\Big] < \infty$} yields, 
\begin{equation}\label{eq:unsplittable_final}
\begin{split}
    &  \limsup_{T \to \infty} \frac{1}{T} \sum_{t=0}^{T-1} \sum_{\serv\in\setS} \E \Big[q_\serv(t)\Big] \le \frac{(C+2D)\mu_{tot}}{2\epsilon} < \infty.
\end{split}
\end{equation}
This concludes the proof.

Note that the proof does not rely on the specific method to estimate the total arrivals (which, in turn, affects the calculated dispatching probabilities). We only rely on the fact that the estimation respects $1 \le a_{est,\disp} \le \infty$ and the way the probabilities are calculated as a result of the estimation. This allows one the freedom to design even better estimation techniques than \cref{eq:estimating a} as we briefly mention in the main text of the paper.

\subsection{Proof of Lemma \ref{lemma:prepate_majorization}}\label{app:prepate_majorization}

Assume by the way of contradiction that $\frac{p_\serv^{(\disp)}(t)}{\mu_\serv} \le \frac{p_{\serv'}^{(\disp)}(t)}{\mu_{\serv'}}$ but $\frac{q_\serv(t) + a^{(\disp)}(t)}{\mu_\serv} < \frac{q_{\serv'}(t)}{\mu_{\serv'}}$. 
For ease of exposition, we next omit the dispatcher superscript $\disp$ and the round index $t$.
Consider an alternative solution where all other probabilities are identical except that we set $\tilde{p}_\serv = p_\serv + p_{\serv'}$ and $\tilde{p}_{\serv'} = 0$. By the definition of the error function given in \cref{eq:standard form}, we have that the difference in the error is
\begin{equation*}
\begin{aligned}
diff = &(a-1) \frac{(p_\serv+p_{\serv'})^2}{\mu_\serv} + 2 (p_\serv+p_{\serv'}) \bp{  \frac{q_\serv}{\mu_\serv} - \wl + \frac{0.5}{\mu_\serv}}  \\&    -\Bigg( (a-1) \frac{p_\serv^2}{\mu_\serv} + 2 p_\serv \bp{  \frac{q_\serv}{\mu_\serv} - \wl + \frac{0.5}{\mu_\serv}} + (a-1) \frac{p_{\serv'}^2}{\mu_{\serv'}} + 2 p_{\serv'} \bp{  \frac{q_{\serv'}}{\mu_{\serv'}} - \wl + \frac{0.5}{\mu_{\serv'}}} \Bigg).  %
\end{aligned}
\end{equation*}
Simplifying yields,
$$diff = (a-1)\bp{ \frac{p_{\serv'}^2 + 2 p_\serv p_{\serv'}}{\mu_\serv}  - \frac{p_{\serv'}^2}{\mu_{\serv'}}}  + 2 p_{\serv'} \bp{  \frac{q_\serv}{\mu_\serv} + \frac{0.5}{\mu_\serv}} - 2 p_{\serv'} \bp{  \frac{q_{\serv'}}{\mu_{\serv'}} + \frac{0.5}{\mu_{\serv'}}}.$$
We now split into two cases. 

\TT{Case 1: $\mu_\serv \ge \mu_{\serv'}$.} In this case
$$diff \le (a-1)\bp{ \frac{2 p_\serv p_{\serv'}}{\mu_\serv}}  + 2 p_{\serv'} \bp{  \frac{q_\serv}{\mu_\serv}} - 2 p_{\serv'} \bp{  \frac{q_{\serv'}}{\mu_{\serv'}}} < (a-1)\bp{ \frac{2 p_\serv p_{\serv'}}{\mu_\serv}} - a\frac{2 p_{\serv'}}{\mu_\serv} < 0.$$

\TT{Case 2: $\mu_\serv < \mu_{\serv'}$.} In this case
$$diff \le (a-1)\bp{ \frac{2 p_\serv p_{\serv'}}{\mu_{\serv'}}}  + 2 p_{\serv'} \bp{  \frac{q_\serv}{\mu_\serv}} - 2 p_{\serv'} \bp{  \frac{q_{\serv'}}{\mu_{\serv'}}} + \frac{p_{\serv'}}{\mu_\serv} < (a-1)\bp{ \frac{2 p_\serv p_{\serv'}}{\mu_{\serv'}}} - a\frac{2 p_{\serv'}}{\mu_\serv}  + \frac{p_{\serv'}}{\mu_\serv} < 0.$$
In both cases the new solution has a lower error. This is a contradiction to the assumption that $\frac{q_\serv(t) + a^{(\disp)}(t)}{\mu_\serv} < \frac{q_{\serv'}(t)}{\mu_{\serv'}}$. This concludes the proof.

\subsection{Proof of Lemma \ref{lemma:majorization}}\label{app:majorization}

For any $\serv\in\setS$, we have that 
$$ \mathbb{E} \Bigg[ \receives_{\serv_{(i)}}^{(\disp)}(t) \,\bigg|\,  a^{(\disp)}(t) , \set{q_\serv(t)}_{\serv\in\setS} \Bigg] =  p_{\serv_{(i)}}^{(\disp)}(t) \cdot a^{(d)}(t),$$
and,
$$ \mathbb{E} \Bigg[ \sum_{k=1}^{a^{(\disp)}(t)} I_{\serv_{(i)}}^{\disp,k}(t) \,\bigg|\,  a^{(\disp)}(t) , \set{q_\serv(t)}_{\serv\in\setS} \Bigg] =  \frac{\mu_{\serv_{(i)}}}{\mu_{tot}}\cdot a^{(d)}(t).$$
This means that
$$ (i)\quad p_{\serv_{(i)}}^{(\disp)}(t) \cdot a^{(d)}(t) = \mu_{\serv_{(i)}} \cdot a^{(d)}(t) \cdot \frac{p_{\serv_{(i)}}^{(\disp)}(t)}{\mu_{\serv_{(i)}}}, \quad\quad\quad  (ii)\quad \frac{\mu_{\serv_{(i)}}}{\mu_{tot}}\cdot a^{(d)}(t) = \mu_{\serv_{(i)}} \cdot a^{(d)}(t) \cdot \frac{1}{\mu_{tot}}.$$
Observe that the term $\mu_{\serv_{(i)}} \cdot a^{(d)}(t)$ is identical in both $(i)$ and $(ii)$, and recall that $\frac{p_{\serv_{(i)}}^{(\disp)}(t)}{\mu_{\serv_{(i)}}}$ is monotonically \emph{non-increasing} in $i$.
This allows us to obtain the following \emph{majorization}~\cite{enwiki:993822990} Lemma.
\begin{lemma}\label{lemma:formalise_majorization_1}
Let $k \in \set{0,1,\ldots,n-1}$. Then,
$$\sum_{i=n}^{n-k} p_{\serv_{(i)}}^{(\disp)}(t) \cdot a^{(d)}(t) \le \sum_{i=n}^{n-k} \frac{\mu_{\serv_{(i)}}}{\mu_{tot}}\cdot a^{(d)}(t) \quad \forall \, k~,$$ 
where an equality is obtained for $k=n-1$.
\end{lemma}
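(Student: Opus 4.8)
The plan is to cancel the common positive factor $a^{(\disp)}(t)$ and prove the partial-sum inequality directly for the probabilities $p^{(\disp)}_{\serv_{(i)}}(t)$ against the normalized rates $w_{\serv_{(i)}}=\mu_{\serv_{(i)}}/\mu_{tot}$, via a single-sign-change argument on their differences.

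First I would record the two normalizations that make the statement meaningful. By primal feasibility of \cref{eq:standard form}, $\sum_{i=1}^n p^{(\disp)}_{\serv_{(i)}}(t)=1$, and by the definition of $w_\serv$, $\sum_{i=1}^n w_{\serv_{(i)}}=\frac{1}{\mu_{tot}}\sum_{\serv\in\setS}\mu_\serv=1$. In particular, for $k=n-1$ both sides equal the full sum $a^{(\disp)}(t)$, giving the asserted equality. For general $k$, dividing through by $a^{(\disp)}(t)>0$ and passing to complements, the inequality $\sum_{i=n-k}^{n}p^{(\disp)}_{\serv_{(i)}}(t)\le\sum_{i=n-k}^{n}w_{\serv_{(i)}}$ is equivalent to the prefix inequality $\sum_{i=1}^{j}\big(p^{(\disp)}_{\serv_{(i)}}(t)-w_{\serv_{(i)}}\big)\ge 0$ with $j=n-1-k$, so it suffices to establish this prefix inequality for every $j\in\{0,\ldots,n-1\}$.

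The key step is to write the per-server difference as $d_i = p^{(\disp)}_{\serv_{(i)}}(t)-w_{\serv_{(i)}} = w_{\serv_{(i)}}\big(\mu_{tot}\cdot\frac{p^{(\disp)}_{\serv_{(i)}}(t)}{\mu_{\serv_{(i)}}}-1\big)$, which uses only $w_{\serv_{(i)}}=\mu_{\serv_{(i)}}/\mu_{tot}$. Since each $w_{\serv_{(i)}}>0$ and, by the chosen ordering of the servers, $p^{(\disp)}_{\serv_{(i)}}(t)/\mu_{\serv_{(i)}}$ is non-increasing in $i$, the factor in parentheses is non-increasing in $i$; hence once some $d_{i}$ is strictly negative, all later $d_{i'}$ are strictly negative too. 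Combined with $\sum_{i=1}^n d_i=0$, this forces every prefix sum to be nonnegative: if $d_1,\ldots,d_j\ge0$ the prefix sum is trivially nonnegative, and otherwise $d_{j+1},\ldots,d_n<0$, so $\sum_{i=1}^{j}d_i=-\sum_{i=j+1}^{n}d_i>0$. Translating back, $\sum_{i=n-k}^{n}d_i=-\sum_{i=1}^{\,n-1-k}d_i\le 0$, and multiplying by $a^{(\disp)}(t)$ yields the claim; equality holds precisely when the prefix is empty, i.e.\ $k=n-1$.

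I do not expect a genuine obstacle here: this is the elementary majorization fact that a mean-zero sequence whose sign is monotone non-increasing has nonnegative prefix sums. The only points needing care are making the normalization $\sum_i w_{\serv_{(i)}}=1$ explicit (so that ``$p$ dominates $w$ on every prefix'' is the correct reformulation) and keeping the direction of the inequality consistent through the reversed summation index $\sum_{i=n}^{n-k}$ and the complementation step.
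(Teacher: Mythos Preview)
Your proof is correct and uses essentially the same idea as the paper: both arguments hinge on the monotonicity of $p_{\serv_{(i)}}^{(\disp)}(t)/\mu_{\serv_{(i)}}$ in $i$ together with the normalization $\sum_i p_{\serv_{(i)}}^{(\disp)}(t)=\sum_i \mu_{\serv_{(i)}}/\mu_{tot}=1$. The paper packages this as a contradiction on the minimal failing $k$ (deducing $p_{\serv_{(n-k')}}/\mu_{\serv_{(n-k')}}>1/\mu_{tot}$ for all $k'\ge k$ and hence $\sum_i p_i>1$), whereas you give the equivalent direct single-sign-change/prefix-sum argument; the only cosmetic quibble is that ``equality holds precisely when $k=n-1$'' is stronger than what the lemma asserts or what your argument shows, since equality can also occur at other $k$ when some leading $d_i$ vanish.
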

\begin{proof}
    See Appendix \ref{app:formalise_majorization_1}.
\end{proof}
Finally, since $\Big( \frac{q_{\serv_{(i)}}(t)}{\mu_{\serv_{(i)}}} + (i-1) \cdot \frac{a^{(\disp)}(t)}{\mu_{min}} \Big)$ is a monotonically \emph{non-decreasing} in $i$, we obtain the following Lemma.
\begin{lemma}\label{lemma:formalise_majorization_2}
$$\sum_{i=n}^{n-k} \bigg( \frac{q_{\serv_{(i)}}(t)}{\mu_{\serv_{(i)}}} + (i-1) \cdot \frac{a^{(\disp)}(t)}{\mu_{min}} \bigg) p_{\serv_{(i)}}^{(\disp)}(t) \cdot a^{(d)}(t) \le \sum_{i=n}^{n-k} \bigg( \frac{q_{\serv_{(i)}}(t)}{\mu_{\serv_{(i)}}} + (i-1) \cdot \frac{a^{(\disp)}(t)}{\mu_{min}} \bigg) \frac{\mu_{\serv_{(i)}}}{\mu_{tot}}\cdot a^{(d)}(t) \quad \forall \, k~.$$ 
\end{lemma}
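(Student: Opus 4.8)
The plan is to obtain Lemma~\ref{lemma:formalise_majorization_2} from Lemma~\ref{lemma:formalise_majorization_1} by a summation-by-parts (Abel summation) argument — the ``majorization'' step the text alludes to — exploiting the two structural facts about the weights $\alpha_i \triangleq \tfrac{q_{\serv_{(i)}}(t)}{\mu_{\serv_{(i)}}} + (i-1)\tfrac{a^{(\disp)}(t)}{\mu_{min}}$: they are nonnegative (all of $q_\serv$, $\mu_\serv$, $a^{(\disp)}(t)$ are nonnegative and $i\ge 1$), and they are nondecreasing in $i$, as noted in the text immediately preceding the lemma.

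Fixing a dispatcher $\disp$, a round $t$, and an index $k\in\{0,\dots,n-1\}$, I would first abbreviate the per-server discrepancy between the two mass vectors as $\delta_i \triangleq p_{\serv_{(i)}}^{(\disp)}(t)\,a^{(\disp)}(t) - \tfrac{\mu_{\serv_{(i)}}}{\mu_{tot}}\,a^{(\disp)}(t)$, and introduce the descending partial sums $S_j \triangleq \sum_{i=n}^{j}\delta_i$ for $j\in\{1,\dots,n\}$, with the convention $S_{n+1}\triangleq 0$. Lemma~\ref{lemma:formalise_majorization_1} then says exactly that $S_j\le 0$ for every such $j$ (apply it with $n-j$ in place of $k$). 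In particular $S_1=0$, since $\sum_{i=1}^{n} p_{\serv_{(i)}}^{(\disp)}(t) = 1$ and $\sum_{i=1}^{n}\mu_{\serv_{(i)}}=\mu_{tot}$ make the two total masses equal; but the argument uses only $S_j\le 0$.

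The computation is the telescoping identity $\delta_i = S_i - S_{i+1}$, which after reindexing the sum over $i=n,\dots,n-k$ gives
\begin{equation*}
\sum_{i=n}^{n-k}\alpha_i\,\delta_i \;=\; \alpha_{n-k}\,S_{n-k} \;+\; \sum_{i=n-k+1}^{n}(\alpha_i-\alpha_{i-1})\,S_i .
\end{equation*}
Every term on the right is $\le 0$: $\alpha_{n-k}\ge 0$ while $S_{n-k}\le 0$, and $\alpha_i-\alpha_{i-1}\ge 0$ while $S_i\le 0$. Hence $\sum_{i=n}^{n-k}\alpha_i\,\delta_i\le 0$, and adding $\sum_{i=n}^{n-k}\alpha_i\,\tfrac{\mu_{\serv_{(i)}}}{\mu_{tot}}\,a^{(\disp)}(t)$ to both sides yields precisely the inequality claimed in Lemma~\ref{lemma:formalise_majorization_2}; since $k$ was arbitrary, all the inequalities follow.

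The only point requiring care is the bookkeeping in the summation by parts over the descending range $i=n,\dots,n-k$ and the handling of the endpoint at $i=n$ (the convention $S_{n+1}=0$ absorbs it); note also that nonnegativity of $\alpha_{n-k}$ is genuinely needed here, because for $k<n-1$ the ``full'' partial sum $S_{n-k}$ is only $\le 0$, not $=0$. If one prefers to sidestep descending indices altogether, an equivalent route is to substitute $\ell=n-i$ so that the partial sums become ordinary forward partial sums and the weights become a \emph{nonincreasing} sequence of nonnegatives, and then invoke the standard weighted-majorization inequality; this is the same argument, packaged differently. Either way the proof is routine once nonnegativity and monotonicity of the $\alpha_i$ and the partial-sum bounds from Lemma~\ref{lemma:formalise_majorization_1} are in hand.
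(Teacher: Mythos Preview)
Your Abel-summation argument is correct and complete: the telescoping identity you wrote is exactly right, nonnegativity of the weights $\alpha_i$ and the partial-sum bounds $S_j\le 0$ from Lemma~\ref{lemma:formalise_majorization_1} are precisely what is needed, and the endpoint convention $S_{n+1}=0$ handles the boundary cleanly.

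The paper takes a different route. It introduces the ``crossing index'' $j$, the smallest index with $p_{\serv_{(j)}}^{(\disp)}(t)\le\mu_{\serv_{(j)}}/\mu_{tot}$, splits the full sum $\sum_{i=1}^{n}\eta_{(i)}\bigl(p_{\serv_{(i)}}^{(\disp)}(t)-\mu_{\serv_{(i)}}/\mu_{tot}\bigr)$ at $j$, bounds the positive block $\{1,\dots,j-1\}$ by $\max_{i<j}\eta_{(i)}$ times its (positive) unweighted sum and the negative block $\{j,\dots,n\}$ by $\min_{i\ge j}\eta_{(i)}$ times its (nonpositive) unweighted sum, and then uses $\max_{i<j}\eta_{(i)}\le\min_{i\ge j}\eta_{(i)}$ together with the fact that the two unweighted sums are negatives of one another. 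That argument, as written, establishes only the case $k=n-1$ (the full sum), which is all that is subsequently used. Your approach, by contrast, proves the stated inequality uniformly for every $k$ with essentially no extra work, and it does not need to locate or reason about the crossing index. Both proofs ultimately rest on the same two ingredients: the monotonicity of the weights and the partial-sum (majorization) inequalities of Lemma~\ref{lemma:formalise_majorization_1}; your packaging via summation by parts is the more economical of the two.
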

\begin{proof}
    See Appendix \ref{app:formalise_majorization_2}.
\end{proof}
In particular,  Lemma \ref{lemma:formalise_majorization_2} holds for $k=n-1$. 
Therefore, using the Law of total expectation and Lemma \ref{lemma:formalise_majorization_2} we obtain
\begin{equation}
\begin{split}
&\mathbb{E} \Bigg[ \mathbb{E} \bigg[ \sum_{{\serv_{(i)}}\in\setS} \bp{\frac{q_{\serv_{(i)}}(t)}{\mu_{\serv_{(i)}}} + (i-1) \cdot \frac{a^{(\disp)}(t)}{\mu_{min}}} \bp{\receives_{\serv_{(i)}}^{(\disp)}(t)- \sum_{k=1}^{a^{(\disp)}(t)} I_{\serv_{(i)}}^{\disp,k}(t)} \,\bigg|\,  a^{(\disp)}(t) , \set{q_\serv(t)}_{\serv\in\setS} \bigg] \Bigg] = \\
&\mathbb{E} \Bigg[ \sum_{{\serv_{(i)}}\in\setS} \bp{\frac{q_{\serv_{(i)}}(t)}{\mu_{\serv_{(i)}}} + (i-1) \cdot \frac{a^{(\disp)}(t)}{\mu_{min}}} \bp{p_{\serv_{(i)}}^{(\disp)}(t) \cdot a^{(d)}(t)- \frac{\mu_{\serv_{(i)}}}{\mu_{tot}}\cdot a^{(d)}(t)}\Bigg] \le 0.
\end{split}
\end{equation}
This concludes the proof. 

\subsection{Proof of Lemma \ref{lemma:formalise_majorization_1} }\label{app:formalise_majorization_1}
By the way of contradiction, consider the smallest $k \in \set{0,1,\ldots,n-1}$ such that
$$\sum_{i=n}^{n-k} \mu_{\serv_{(i)}} \cdot a^{(d)}(t) \cdot \frac{p_{\serv_{(i)}}^{(\disp)}(t)}{\mu_{\serv_{(i)}}} > \sum_{i=n}^{n-k} \mu_{\serv_{(i)}} \cdot a^{(d)}(t) \cdot \frac{1}{\mu_{tot}}~.$$
Dividing both terms by $a^{(d)}(t)$ yields
$$\sum_{i=n}^{n-k} \mu_{\serv_{(i)}} \frac{p_{\serv_{(i)}}^{(\disp)}(t)}{\mu_{\serv_{(i)}}} > \sum_{i=n}^{n-k} \mu_{\serv_{(i)}} \cdot \frac{1}{\mu_{tot}}~.$$
Since $\frac{p_{\serv_{(i)}}^{(\disp)}(t)}{\mu_{\serv_{(i)}}}$ is monotonically \emph{non-increasing} sequence in $i$, we must have that if $k' \ge k$ then $\frac{p_{\serv_{(n-k')}}^{(\disp)}(t)}{\mu_{\serv_{(n-k')}}} > \frac{1}{\mu_{tot}}$. This means that it must hold that 
$$\sum_{i=n}^{1} \mu_{\serv_{(i)}} \frac{p_{\serv_{(i)}}^{(\disp)}(t)}{\mu_{\serv_{(i)}}} > \sum_{i=n}^{1} \mu_{\serv_{(i)}} \cdot \frac{1}{\mu_{tot}} = 1~.$$
 This is a contradiction to the fact that $\sum_{i=n}^{1} \mu_{\serv_{(i)}} \cdot \frac{1}{\mu_{tot}} = \sum_{i=1}^{n} \mu_{\serv_{(i)}} \cdot\frac{p_{\serv_{(i)}}^{(\disp)}(t)}{\mu_{\serv_{(i)}}} = \sum_{i=1}^{n} p_{\serv_{(i)}}^{(\disp)}(t) = 1$ (that is, both are probability vectors and thus their sum must be equal to 1).

\subsection{Proof of Lemma \ref{lemma:formalise_majorization_2} }\label{app:formalise_majorization_2}

Denote $\eta_{(i)} = \bigg( \frac{q_{\serv_{(i)}}(t)}{\mu_{\serv_{(i)}}} + (i-1) \cdot \frac{a^{(\disp)}(t)}{\mu_{min}} \bigg)$ and recall that $\eta_{(i)}$ is a monotonically \emph{non-decreasing} in $i$ and that $\frac{p_{\serv_{(i)}}^{(\disp)}(t)}{\mu_{\serv_{(i)}}}$ is monotonically \emph{non-increasing} in $i$. 
Consider the smallest index $j$ such that
$p_{\serv_{(j)}}^{(\disp)}(t) \le \frac{\mu_{\serv_{(j)}}}{\mu_{tot}}$. 
If $j=1$ than by Lemma \ref{lemma:formalise_majorization_1} and since $\sum_{i=1}^{n} p_{\serv_{(i)}}^{(\disp)}(t) = 1$, it must hold that $p_{\serv_{(j)}}^{(\disp)}(t) = \frac{\mu_{\serv_{(j)}}}{\mu_{tot}} \,\, \forall \, i$ and we are done. Now, assume $j>1$.
Consider the sums
$$\sum_{i=j-1}^{1} \eta_{(i)} \cdot ( p_{\serv_{(i)}}^{(\disp)}(t) - \frac{\mu_{\serv_{(i)}}}{\mu_{tot}} ) \le \max \set{\eta_{(i)}}_{i \in \set{1,\ldots,j-1}} \sum_{i=1}^{j-1}  ( p_{\serv_{(i)}}^{(\disp)}(t) - \frac{\mu_{\serv_{(i)}}}{\mu_{tot}} )~,$$
and
$$\sum_{i=n}^{j} \eta_{(i)} \cdot ( p_{\serv_{(i)}}^{(\disp)}(t) - \frac{\mu_{\serv_{(i)}}}{\mu_{tot}} ) \le \min \set{\eta_{(i)}}_{i \in \set{j,\ldots,n}} \sum_{i=n}^{j}  ( p_{\serv_{(i)}}^{(\disp)}(t) - \frac{\mu_{\serv_{(i)}}}{\mu_{tot}} )~.$$
Notice that 
$$ \max \set{\eta_{(i)}}_{i \in \set{1,\ldots,j-1}} \le \min \set{\eta_{(i)}}_{i \in \set{j,\ldots,n}}~,$$
and 
$$\sum_{i=1}^{j-1}  ( p_{\serv_{(i)}}^{(\disp)}(t) - \frac{\mu_{\serv_{(i)}}}{\mu_{tot}} ) =  - \sum_{i=n}^{j}  ( p_{\serv_{(i)}}^{(\disp)}(t) - \frac{\mu_{\serv_{(i)}}}{\mu_{tot}} )~.$$
Therefore, we obtain

\begin{equation}
\begin{split}
\sum_{i=n}^{1} & \eta_{(i)} \cdot ( p_{\serv_{(i)}}^{(\disp)}(t) - \frac{\mu_{\serv_{(i)}}}{\mu_{tot}} ) = \sum_{i=j-1}^{1} \eta_{(i)} \cdot ( p_{\serv_{(i)}}^{(\disp)}(t) - \frac{\mu_{\serv_{(i)}}}{\mu_{tot}} ) + \sum_{i=n}^{j} \eta_{(i)} \cdot ( p_{\serv_{(i)}}^{(\disp)}(t) - \frac{\mu_{\serv_{(i)}}}{\mu_{tot}} )\\
& \le \max \set{\eta_{(i)}}_{i \in \set{1,\ldots,j-1}} \sum_{i=1}^{j-1}  ( p_{\serv_{(i)}}^{(\disp)}(t) - \frac{\mu_{\serv_{(i)}}}{\mu_{tot}} ) + \min \set{\eta_{(i)}}_{i \in \set{j,\ldots,n}} \sum_{i=n}^{j}  ( p_{\serv_{(i)}}^{(\disp)}(t) - \frac{\mu_{\serv_{(i)}}}{\mu_{tot}} ) \\
& = \sum_{i=1}^{j-1}  ( p_{\serv_{(i)}}^{(\disp)}(t) - \frac{\mu_{\serv_{(i)}}}{\mu_{tot}} ) \bp{ \max \set{\eta_{(i)}}_{i \in \set{1,\ldots,j-1}} - \min \set{\eta_{(i)}}_{i \in \set{j,\ldots,n}} } \le 0
\end{split}
\end{equation}
This concludes the proof.\newpage

\begin{figure}[t]
    \centering
    \begin{subfigure}{0.8\textwidth}
      \centering
      \caption{Average response time.}
    \includegraphics[width=\linewidth]{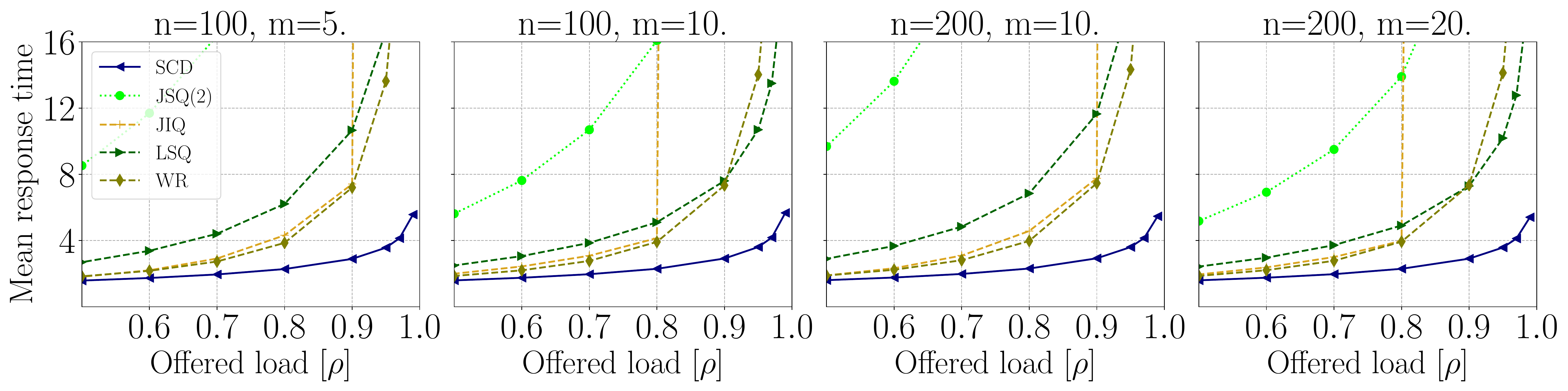}
     \label{fig:app:evaluation:moderate:loadsweep}
    \end{subfigure}
    \begin{subfigure}{0.8\textwidth}
      \centering
      \caption{Response time delay tail.}
    \includegraphics[width=\linewidth]{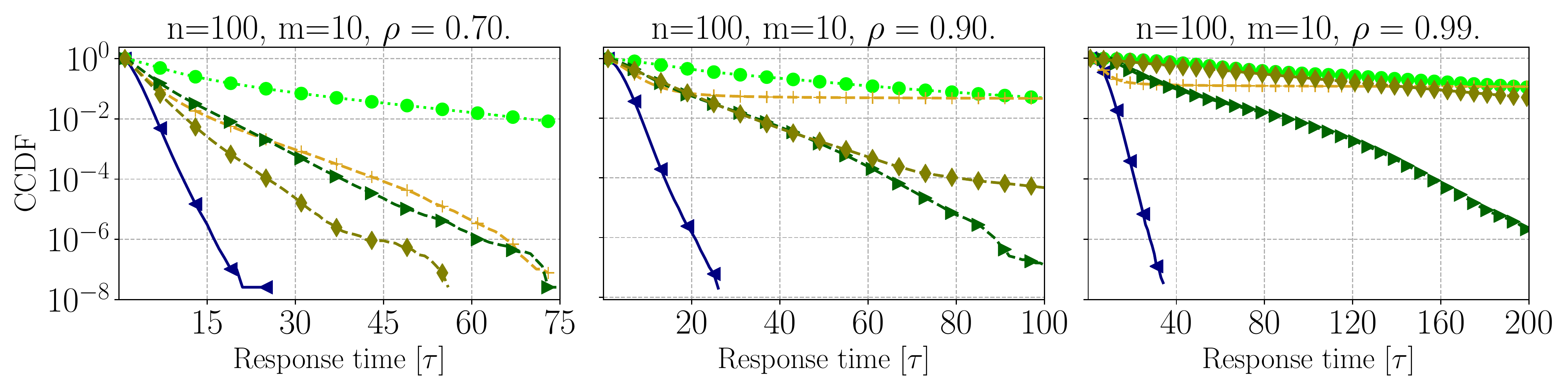}  
     \label{fig:app:evaluation:moderate:delaytail}
    \end{subfigure}
    \caption{ Complementary results to Figure \ref{fig:evaluation:moderate}. Comparing $SCD$ versus $JSQ(2)$, $JIQ$ $LSQ$ and $WR$ where $\mu_\serv \sim U[1,10]$. }
    \label{fig:app:evaluation:moderate}
\end{figure}

\begin{figure}[ht]
    \centering
    \begin{subfigure}{0.8\textwidth}
      \centering
      \caption{Average response time.}
    \includegraphics[width=\linewidth]{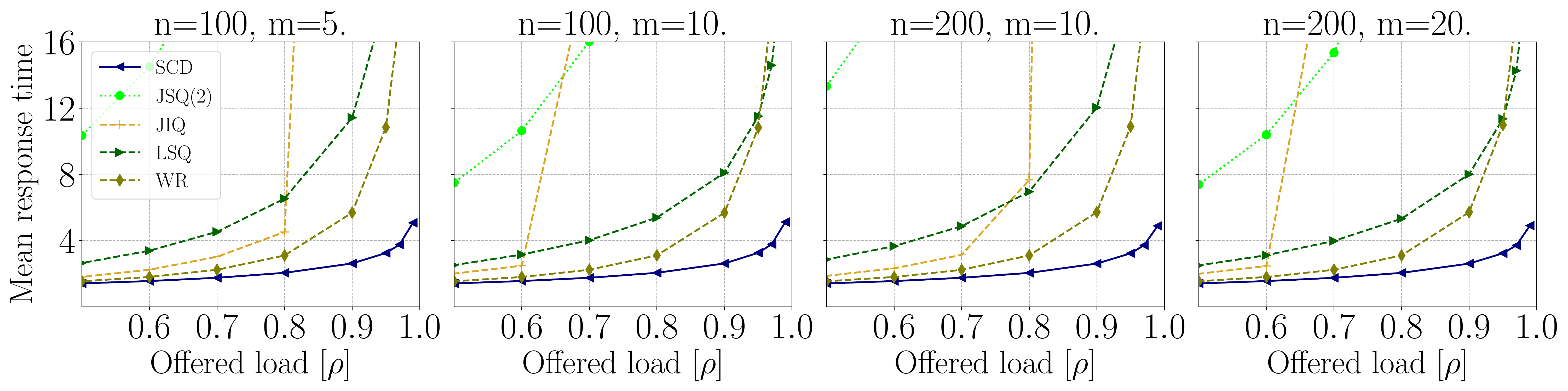}
     \label{fig:app:evaluation:high:loadsweep}
    \end{subfigure}
    \begin{subfigure}{0.8\textwidth}
      \centering
      \caption{Response time delay tail.}
    \includegraphics[width=\linewidth]{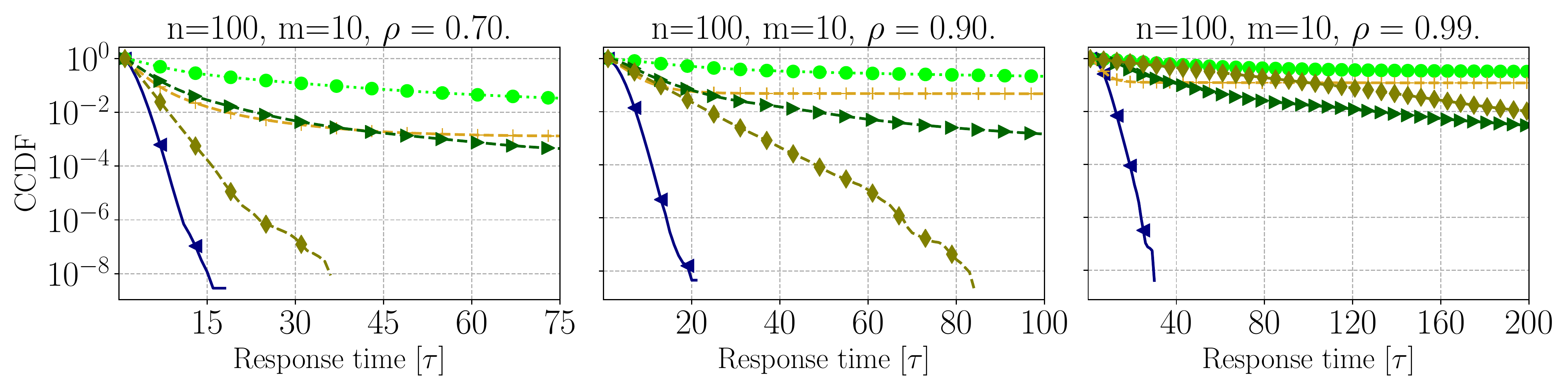}  
     \label{fig:app:evaluation:high:delaytail}
    \end{subfigure}
    \caption{ Complementary results to Figure \ref{fig:evaluation:high}. Comparing $SCD$ versus $JSQ(2)$, $JIQ$ $LSQ$ and $WR$ where $\mu_\serv \sim U[1,100]$. }
    \label{fig:app:evaluation:high}
\end{figure}

\section{Additional simulation results}\label{app:exatra_results}

We next overview additional simulation results omitted from the main text for clarity and interest of space. 

\subsection{Response time}\label{app:Response time}

We show complementary results to Figures \ref{fig:evaluation:moderate} and \ref{fig:evaluation:high} (Section \ref{sec:eval:Response time}) comparing $SCD$ to $JSQ(2)$, $JIQ$ $LSQ$ and weighted random ($WR$)\footnote{In $WR$, each request is send to server $\serv$ with probability $\frac{\mu_\serv}{\mu_{tot}}$.}.
It is evident how $SCD$ significantly outperforms all these techniques across all systems, metrics, and offered loads. Indeed, these techniques are less competitive than the six presented in the main text. This is because $JSQ(2)$, $JIQ$, and $LSQ$ do not account for server heterogeneity, and $WR$ do account for server heterogeneity but ignores queue length information.

\begin{figure}[t]
    \centering
    \includegraphics[width=\linewidth]{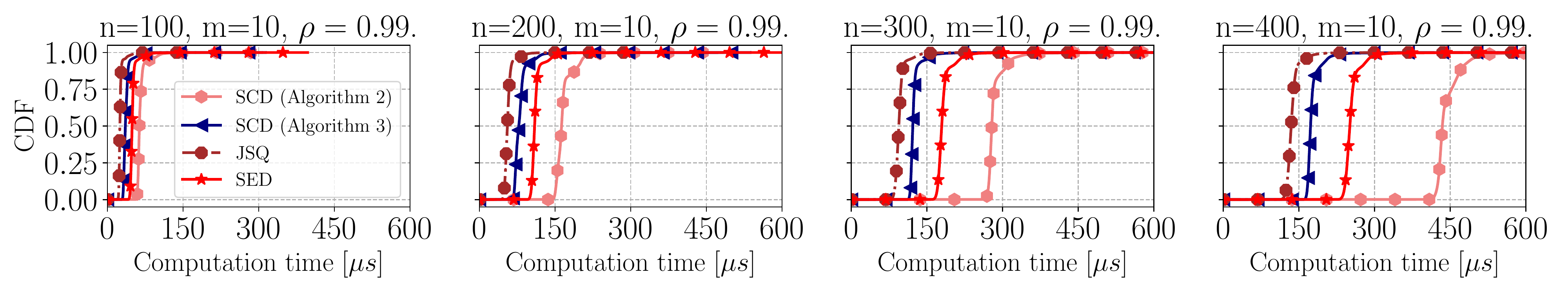}  
    \caption{ Evaluation of execution run-time over systems with an increasing number of server and $\mu_\serv \sim U[1,100] \,\,\forall \serv\in\setS$. }
    \label{fig:app:evaluation:speed}
\end{figure}

\subsection{Execution run-time}\label{app:Execution run-time}

We repeat the experiments from Section \ref{sec:eval:Execution running times} where we set $\mu_\serv \sim U[1,100]$ instead of $\mu_\serv \sim U[1,10]$. 
The results are depicted in Figure \ref{fig:app:evaluation:speed} and show similar trends. Again, it is evident how the run-time of $SCD$ via Algorithm \ref{alg:opt} scales similarly to $JSQ$ and $SED$ as expected. $SCD$ via Algorithm \ref{alg:n2}, on the other hand, is again slower. 
Interestingly, with the increased heterogeneity, though constant, there is a larger gap among $SCD$, $JSQ$ and $SED$. In particular, $SED$ becomes somewhat slower than $SCD$.

Our investigation revealed the following reason. All algorithm use dedicated and optimized for speed data structures\footnote{for $JSQ$ and $SED$ these are min-heaps that always keep the next best server at the top of the heap so we do not have to sort the server according to their queue length ($JSQ$) or their load ($SED$) after each update.}. However, with the increased heterogeneity, there is an increased gap between the exact number of operations $JSQ$ and $SED$ require to update their data structures when assigning new requests to servers. For $JSQ$, when assigning a new request to a server $\serv$ we need to update $q_\serv(t) \gets q_\serv(t)+1$ disregarding the server processing rate. This results in a predictable behavior of the data structure where only a few operations are needed to fix it (often a single operation). However, for $SED$ the behavior is less predictable. The update in this case is $\frac{q_\serv(t)}{\mu_\serv} \gets \frac{q_\serv(t) + 1}{\mu_\serv}$ where the addition is proportionally inverse the server processing rate, i.e., $\frac{1}{\mu_\serv}$. Thus, the assignment of requests to different servers often results in more required operations to fix the data structure.

\end{document}